\documentclass[11pt]{article}
\usepackage{latexsym,amsmath,url,epsfig}
\usepackage{tikz}
\usepackage{cases}
\usepackage{booktabs}
\usepackage{threeparttable}
\usepackage{multirow}
\usepackage{textcomp,verbatim}
\usepackage{amsfonts,euscript}
\usepackage{amsmath,mathtools}
\usepackage{amssymb}
\usepackage{amsfonts}
\usepackage{amsthm}
\usepackage{mathrsfs}
\usepackage[all]{xy}
\usepackage{footnote}
\usepackage{epstopdf}
\usepackage{subfigure}
\usepackage{rotating}
\usepackage[title]{appendix}
\usepackage{multirow,rotating}
\usepackage{url}
\usepackage{subfigure}
\usepackage{algorithm}
\usepackage{algorithmicx}
\usepackage{algpseudocode}
\usepackage{subfigure}
\usepackage{setspace}
\usepackage{listings}
\usepackage{color}
\usepackage{tikz}
\usepackage{graphics}
\usepackage[breaklinks=true]{hyperref}
\usepackage[english]{babel}
\usepackage{datetime}
\usepackage{subfigure}
\usepackage{booktabs}
\usepackage{rotfloat}
\usepackage[authoryear]{natbib}
\usepackage{framed}
\usepackage{natbib}
\usepackage{graphicx}

\usetikzlibrary{chains,shapes,decorations,arrows,calc,arrows.meta,fit,positioning}
\tikzset{three sided/.style={
        draw=none,
        append after command={
            [shorten <= -0.5\pgflinewidth]
            ([shift={(-0.5\pgflinewidth,-0.5\pgflinewidth)}]\tikzlastnode.north east)
        edge([shift={( 0.5\pgflinewidth,-0.5\pgflinewidth)}]\tikzlastnode.north west) 
              ([shift={( 0.5\pgflinewidth,-0.5\pgflinewidth)}]\tikzlastnode.north east)
        edge([shift={( 0.5\pgflinewidth,-.95\pgflinewidth)}]\tikzlastnode.south east)  
        
            ([shift={( -0.5\pgflinewidth,-0.5\pgflinewidth)}]\tikzlastnode.south west)
        edge([shift={(0.5\pgflinewidth,-0.5\pgflinewidth)}]\tikzlastnode.south east)
        }
    }
}
\tikzset{three sided2/.style={
        draw=none,
        append after command={
            [shorten <= -0.5\pgflinewidth]
          ([shift={( -0.5\pgflinewidth,-0.5\pgflinewidth)}]\tikzlastnode.north west)
        edge([shift={( -0.5\pgflinewidth,-.95\pgflinewidth)}]\tikzlastnode.south west)  
        
              ([shift={( 0.5\pgflinewidth,-0.5\pgflinewidth)}]\tikzlastnode.north east)
        edge([shift={( 0.5\pgflinewidth,-.95\pgflinewidth)}]\tikzlastnode.south east)  
        
            ([shift={( -0.5\pgflinewidth,-0.5\pgflinewidth)}]\tikzlastnode.south west)
        edge([shift={(0.5\pgflinewidth,-0.5\pgflinewidth)}]\tikzlastnode.south east)
        }
    }
}
\tikzset{three sided3/.style={
        draw=none,
        append after command={
            [shorten <= -0.5\pgflinewidth]
  ([shift={(-0.5\pgflinewidth,-0.5\pgflinewidth)}]\tikzlastnode.north east)
        edge([shift={( 0.5\pgflinewidth,-0.5\pgflinewidth)}]\tikzlastnode.north west) 
          ([shift={( -0.5\pgflinewidth,-0.5\pgflinewidth)}]\tikzlastnode.north west)
        edge([shift={( -0.5\pgflinewidth,-.95\pgflinewidth)}]\tikzlastnode.south west)  
        
              ([shift={( 0.5\pgflinewidth,-0.5\pgflinewidth)}]\tikzlastnode.north east)
        edge([shift={( 0.5\pgflinewidth,-.95\pgflinewidth)}]\tikzlastnode.south east)

        }
    }
}

\graphicspath{pic/}
\newtheorem{proposition}{Proposition}

\theoremstyle{definition}

\algdef{SE}[DOWHILE]{Do}{doWhile}{\algorithmicdo}[1]{\algorithmicwhile\ #1}

\allowdisplaybreaks[0]

\begin{document}

\title{Experimentation for Different Scheduling Policies on Queues: \\ Mixed Differences-in-Q Estimators Based on Little's Law}

\author{Nanshan Jia \\ University of California, Berkeley \and Ramesh Johari \\ Stanford University \and Nian Si\thanks{Corresponding author: niansi@ust.hk} \\ The Hong Kong University of Science and Technology \and Zeyu Zheng \\  University of California, Berkeley}
\date{\today }

\maketitle
\begin{abstract}
In data centers, tasks are dispatched to various servers to evenly distribute the workload. When a data center considers implementing a new scheduling algorithm, it typically conducts an A/B test prior to deployment to assess the real-world impact of this new method. However, a straightforward A/B test might be interfered with so-called ``Markovian'' interference. We utilized the Differences-in-Q estimator, as developed by \citet{farias2022markovian}, and introduced mixed Differences-in-Q estimators grounded in Little's Law. We show that our A/B testing methods significantly reduce bias and variance when testing various scheduling policies. Extensive simulations were conducted under scenarios like non-stationary arrival rates, heterogeneous service rates, and communication delays. These simulations highlight the robustness and efficacy of our A/B testing approach. Our code is publicly available on GitHub.\footnote{\url{https://github.com/nsjia/Power-of-d-experiments.git}}
\end{abstract}

\section{Introduction}
\label{sec:simple:eg1} 
In the era of the digital economy and Generative AI, the surge in reliance on technological solutions has been unprecedented, which necessitates the deployment of extensive computational resources across global data centers. For example, in managing the colossal volume of billions of requests daily, Google employs sophisticated scheduling algorithms to efficiently dispatch these queries across a global network of servers \citep{abts2010energy,brewer2016disks}. Similarly, platforms like ChatGPT process billions of prompts through extensive models, leveraging advanced scheduling and computational techniques to generate tokens \citep{stojkovic2024towards}. 

Against this backdrop, the role of online scheduling systems becomes paramount due to the need to balance response time with computational costs under a myriad of constraints. These policies must account for geographical location to minimize latency, the computational load to optimize resource usage, and maintenance schedules to ensure system reliability without compromising service quality. Additionally, they must handle the heterogeneity of hardware resources, which can vary widely in capability and availability across different nodes. Furthermore, these scheduling decisions are continuously adjusted in response to fluctuating demand. Those practical constraints make the design and implementation of effective scheduling policies a complex yet critical task for maintaining the high performance and scalability of computational infrastructures. 

Therefore, to effectively develop a new scheduling policy, relying solely on theoretical analysis under restrictive assumptions is insufficient. In industrial practice, the real performance of any new policies must first be validated through A/B tests. In an A/B test, tasks (requests, prompts) are randomly assigned to a treatment and a control group with different scheduling policies. Metrics such as the average response time between tasks in the two groups are then compared to measure the effectiveness of the newly developed policies.

However, such A/B tests may produce biased results due to interference between different scheduling policies. One scheduling policy may influence the server load, which in turn affects the response time of tasks governed by another scheduling policy.

Such interference is well-documented in the literature and is termed intertemporal interference or, when  underlying states and Markovian structure are present, Markovian interference. To address this issue, \citet{farias2022markovian} propose an innovative Differences-in-Q (DQ) estimator. The DQ estimator works by summing all subsequent rewards to estimate Q-values and then using the difference in Q-values to estimate the global treatment effect. This approach can substantially reduce the bias arising from the evolution of the underlying state. The effectiveness of the estimator is further validated through a Douyin simulation study \citep{farias2023correcting}, which demonstrates significant bias reduction. However, the method may still suffer from high variance.

Therefore, to address the high-variance issue of the DQ estimator, we propose a mixed Differences-in-Q estimator for Bernoulli trials, which generalizes the Differences-in-Q estimator of \citet{farias2022markovian}. Our approach leverages Little’s Law \citep{Little2011Little}, a fundamental principle that virtually applies to any queuing system. Little’s Law states that, in a stable system, the long-term average queue length (include tasks in service) equals the average arrival rate multiplied by the average response time (waiting time + service time). 
Specifically, we maintain two Q estimators: one for the average response time and one for the average queue length. By Little’s Law, their ratio gives the long-term average effective arrival rate. We then compute optimal mixing weights and combine the two Q estimators to reduce variance.

Our contributions are summarized as follows:

1. We develop a model of A/B testing in data centers within the framework of the supermarket model \citep{Mitzenmacher2001power-of-2}, where incoming tasks are randomly assigned to servers according to different load balancing policies. This formulation enables us to rigorously capture the performance impact of alternative scheduling strategies and provides a natural setting for evaluating treatment effects under randomized experiments. Although we focus on the supermarket model purely for clarity of exposition, the underlying methodology is broadly applicable and can be extended to essentially all queueing network settings.

2. We propose a mixed Differences-in-Q estimator that leverages the unique features of the queueing model. Our method is easy to implement, and we prove that this estimator can achieve low bias and variance compared to other estimators.

3. We conduct thorough simulations in various practical contexts, including heterogeneous service rates, non-stationary arrival rates, communication delays, and non-exponential service times. We also test many different scheduling policies. Our numerical results demonstrate that our A/B tests can achieve low bias and variance across a wide range of practical scenarios.

Finally, our findings highlight a broader insight: incorporating domain knowledge, specifically, the structural properties of queueing systems, can substantially reduce variance and improve the reliability of performance evaluation.

\section{Literature Review}
\textbf{Little's Law. }
Little’s Law, first proposed in \cite{AlanCobham1954_littlelaw,Morse1958Queues_littlelaw} and formally proved in \cite{Little1961A}, has been widely studied across different settings and applications. In its standard form, it states that the average queue length in a stable queueing system, equals the product of the average arrival rate and the average response time. Early studies considered parallel-server systems that were empty at both $t=0$ and $t=T$. A typical example is a supermarket that opens and closes with no customers. Later, the law was generalized to allow nonzero initial and terminal queue lengths, with the average arrival rate redefined to include tasks already present at $t=0$. Further details are provided in Appendix \ref{The Little's Law}.\\

\noindent\textbf{Multi-server scheduling polices.}  There are many existing load balancing policies that enjoy good theoretical properties. The join-the-shortest-queue (JSQ) policy routes tasks to the least-loaded server and has strong theoretical foundations \citep{Mukherjee_2016,zhao2023manyserver,Banerjee_2019,Banerjee_2020}. Its drawback is the need for global queue-length information, which causes high communication cost. To address this, the power-of-$d$ policy samples $d\geq 2$ servers and assigns  tasks to the shortest server. Power-of-$2$ achieves exponential gains over random assignment \citep{Mitzenmacher2001power-of-2}. Later studies proved mean-field convergence \citep{graham_2000}, heavy-traffic limits \citep{universality_of_powerofd,Mukherjee_2020}, and asymptotic optimality of the power-of-$d$ family. Extensions also analyze networked servers with graph topologies \cite{mukherjee2018topo,Budhiraja_2019,rutten2020load,rutten2023mean}. 
Other policies include MJSQ-$r$, which mixes random and shortest-queue routing \citep{banerjee2023load}. The join-the-idle-queue (JIQ) assigns jobs to idle servers, or randomly otherwise \citep{Badonnel2008_jiq,LU2011_jiq,Baltzer2015_jiq}. JIQ-$d$ policy generalizes this by combining idle assignment with power-of-$d$ routing \citep{jiq-d_policy}.\\

\noindent\textbf{Experimental design under intertemporal interference.} Intertemporal interference occurs when a previous treatment affects outcomes in subsequent periods.  To address intertemporal interference, switchback experiments are commonly employed and have been widely studied in the literature.  In these experiments, the experimenter alternates between treatment and control, typically on an hourly basis. Under Markovian assumptions, optimal switchback designs have been thoroughly analyzed by \citet{glynn2020adaptive}, \citet{hu2022switchback}, \citet{li2023experimenting}, and \citet{jia2023faster}.  
Beyond Markovian settings, \citet{bojinov2023design}, \citet{hu2022switchback}, \citet{xiong2023data}, \citet{xiong2023bias}, \citet{basse2023minimax}, \citet{xiong2019optimal}, and \citet{ni2023design} have investigated various aspects of switchback experiments. In the data center environment, the task arrivals are highly variable, leading to high variance in the target metrics due to factors like time-of-day effects.  As a result, the variance of switchback estimates can be substantial relative to the treatment effect.

\section{Problem Settings}
\label{Problem Setting}
In this paper, we consider a queueing network with $N$ parallel servers: tasks arrive as a Poisson stream with rate $\lambda N$, as depicted in Figure \ref{fig:exp1}. The arrival times are denoted by a sequence $\{t_{j}:0=t_0<t_{1}<\cdots\}$. When a task arrives, the system assigns it to one of the servers according to a scheduling policy. Tasks are served according to the first-in-first-out protocol, and the service times for tasks in the $i$-th server follow independent exponential distributions with rate $\mu_{i}$. Temporarily, we set $\mu_{1}=\mu_{2}=\cdots=\mu_{N}=1$ and leave the discussion of heterogeneous service rate settings in Section \ref{Numerical Results} later. 
\begin{figure}[!htbp]
\centering
\scalebox{0.8}{ 
\begin{tikzpicture}[start chain=going below,>=latex,node distance=1pt]
	\node[rectangle,draw,minimum width=3cm,minimum height = 1cm]  (wa)  {Router};
	
	\node[three sided2,minimum width=1cm,minimum height = 1.5cm,on chain]  (wa0) [below=of wa,xshift=-3cm,yshift=-1cm]  {1} ;

	\node[draw,circle,on chain,minimum size=1cm] (se) {$\mu_1$};

	\node[three sided2,minimum width=1cm,minimum height = 1.5cm,on chain]  (wa2)[right =of 
	wa0,xshift=1cm]  
	{2};
    
		\node[draw,circle,on chain,minimum size=1cm] (se2) {$\mu_2$};

\node[rectangle,draw=red!0,minimum width=1cm,minimum height = 1.5cm,on chain]  (wa_white)[right=of 
		wa2,xshift=1cm]  
		{\ldots};

	\node[three sided2,minimum width=1cm,minimum height = 1.5cm,on chain]  (wa3)[right=of 
		wa_white,xshift=1cm]  
		{N};
	
		\node[draw,circle,on chain,minimum size=1cm] (se3) {$\mu_N$};

	\draw[<-] (wa.north) -- +(0,20pt) node[above] {$N\lambda$};

  \draw[->] (se2.south) -- +(0,-20pt);
\draw[->] (se.south) -- +(0,-20pt);
  \draw[->] (se3.south) -- +(0,-20pt);
\draw[->] (wa) -- (wa2.north);
\draw[->] (wa) -- (wa3.north);
\draw[->] (wa) -- (wa0.north);
\draw[->] (wa) -- (wa_white.north);
	
\end{tikzpicture}
}
\caption{The parallel queue service system}
\label{fig:exp1}
\end{figure}

For the ease of exposition, we now introduce some notations. We define $q_{i}$ to be the queue length (including the task in service) in the $i$-th server and $s_{i}$ to be the proportion of servers with at least $i$ pending tasks. Equivalently, we denote
\begin{align}
s_{i} =\frac{\sum_{j=1}^{N}\mathbf{1}_{\{q_{j}\geq i\}}}{N}\quad i=0,1,2,\cdots.
\label{def si}
\end{align}
The state space of this parallel-server system is given by an infinite dimensional vector space 
\begin{align*}
\mathcal{S} =\{\mathbf{s} =(s_1,s_2 ,\cdots) :1 \geq s_1 \geq s_2 \geq \cdots \geq 0 \\
\text{ with }s_{i} =\frac{k_{i}}{N}\text{ for some integer }k_{i}\}.
\end{align*}
By convention, we denote $s_{0}=1$ throughout the context. 

We consider two types of performance measures: the response time and the average queue length. which we define as follows. 
1) Let $c_{w}:\mathcal{S}\to \mathbb{R}^{+}$  denote the (random) response time (waiting time + service time) of a job arriving when the underlying system is in state $s$ just before the arrival.
2) For the average queue length, we define $c_{q}:\mathcal{S}\to \mathbb{R}^{+}$ as the corresponding cost function, where $c_{q}(s)$ represents the average queue length (include tasks in service) across  all $N$ servers in state $s$. Equivalently,
$$
    c_{q}(s) := \frac{1}{N} \sum_{i=1}^N q_i = \sum_{i=1}^{\infty} s_i. 
$$
Then, we consider  the long-run average response time and long-run average queueing length, defined as 
\begin{align}
C_{w}(\lambda)&:=\underset{T\to \infty}{\lim}\frac{1}{TN\lambda}\sum_{t_{j}<T}c_{w}(s(t_{j}^-)),\quad   \\
C_{q}(\lambda)&:=\underset{T\to \infty}{\lim}\frac{1}{TN\lambda}\sum_{t_{j}<T}c_{q}(s(t_{j}^-)),
\end{align}
where $s(t_j^-)$ denotes the state just before the arrival of $j$-th task, namely $s(t_j^-)=\lim_{t\uparrow t_j} s(t)$.

In general, our objective is to study the global treatment effect (GTE), defined as the performance difference between two scheduling policies in a parallel-server system. For clarity of exposition, we focus on the GTE between the power-of-$d_{1}$ policy and the power-of-$d_{2}$ policy, where $1 \leq d_{2} < d_{1}$. Recall that power-of-$d$ scheduling policy assigns each incoming task to the shortest queue among $d$ servers chosen uniformly at random.  We remark that our methodology extends naturally to arbitrary scheduling policies, as we demonstrate in Section \ref{Numerical Results}. We represent the treatment assignment by $a \in \{0,1\}$, where $a=0$ denotes the control policy and $a=1$ denotes the treatment policy. 

Therefore, with a slight abuse of notation, we define $c_{q}(s,a)$ and $c_{w}(s,a)$ to denote the queue length and response time metrics, respectively, under treatment assignment $a\in\{0,1\}$. Furthermore, we use $C_{0,w}(\lambda)$ and $C_{1,w}(\lambda)$ to denote the performance measures under the global control and global treatment regimes. 
It is then customary to consider the estimation of the Global Treatment Effect (GTE), defined as
\begin{equation}
    \mathrm{GTE}_w \;=\; C_{1,w}(\lambda) - C_{0,w}(\lambda).
    \label{def GTE}
\end{equation}
As our main performance metric in this paper is the response time, we may 
sometimes omit the subscript $w$ for simplicity.

The following proposition presents Little’s law, which is central to our estimator design. It states that the GTE defined in terms of response time and that defined in terms of queue length are equivalent up to a scaling factor given by the arrival rate.
\begin{proposition}
\label{prob:little'slaw}
We define the queue-length-based GTE as
\begin{align}
\mathrm{GTE}_{q}=\frac{C_{1,q}(\lambda)-C_{0,q}(\lambda)}{\lambda}.
\label{def GTE queue}
\end{align}
Then we have
\begin{equation}
\mathrm{GTE}_w=\mathrm{GTE}_{q}.
\end{equation}
\end{proposition}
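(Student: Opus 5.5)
Since $\mathrm{GTE}_w = C_{1,w}(\lambda) - C_{0,w}(\lambda)$ and $\mathrm{GTE}_q = (C_{1,q}(\lambda) - C_{0,q}(\lambda))/\lambda$, it suffices to prove the single-policy identity $C_{a,q}(\lambda) = \lambda\, C_{a,w}(\lambda)$ for each $a\in\{0,1\}$, where the whole system runs under policy $a$ (power-of-$d_1$ or power-of-$d_2$). Subtracting the two identities and dividing by $\lambda$ then gives the claim. I would fix $a$ and carry out the following four steps.

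\textbf{Step 1: stability.} Under either policy the system is a continuous-time Markov chain on the countable set of configurations. Because $\lambda<1$ (implicit for the limits to be finite), it is positive recurrent. To justify this I would compare with random routing ($d=1$), which gives $N$ independent M/M/1 queues with load $\lambda$: power-of-$d$ is dominated by random routing in total queue length, or alternatively a Foster--Lyapunov argument with $\sum_i q_i^2$ works. Positive recurrence then yields a stationary distribution $\pi_a$ and ergodic averages, so the limits defining $C_{a,q}$ and $C_{a,w}$ exist almost surely.

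\textbf{Step 2: PASTA for the queue-length side.} Arrivals are Poisson with rate $N\lambda$, and the routing randomness is independent of the future arrivals. By PASTA,
\[
\frac{1}{TN\lambda}\sum_{t_j<T} c_q(s(t_j^-)) \to \mathbb{E}_{\pi_a}[c_q(s)] = \lim_{T\to\infty}\frac1T\int_0^T \frac1N\sum_i q_i(t)\,dt .
\]
So $C_{a,q}(\lambda)$ is the time-average of the per-server mean queue length, which I denote $L/N$, where $L$ is the time-average total number of jobs in the system.

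\textbf{Step 3: Little's law for the whole system.} The overall arrival rate is $N\lambda$. The job-average sojourn time is $W = \lim_{n\to\infty}\frac1n\sum_{j\le n} c_w(s(t_j^-))$. Little's law (Appendix~\ref{The Little's Law}) gives $L = N\lambda W$. Since the number of arrivals before $T$ satisfies $\#\{j: t_j<T\}/T \to N\lambda$, we get $C_{a,w}(\lambda)=W$. Combining with Step 2, $C_{a,q}(\lambda) = L/N = \lambda W = \lambda C_{a,w}(\lambda)$.

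\textbf{Step 4: the main obstacle.} The delicate point is that $c_w$ is random and depends on future service times and on the routing of later arrivals under FIFO. The sum over $j$ is therefore not a function of the Markov state alone, so ergodicity must be applied to the sequence of sojourn times rather than the state chain. I would handle this with the sample-path version of Little's law. The boundary terms are jobs present at time $0$ and at time $T$, and the residual work at $T$ is $o(T)$ almost surely because the state is positive recurrent. Under these conditions, existence of the limits of $L$ and of the arrival rate implies existence of $W$ and the identity $L=N\lambda W$. This avoids needing any separate regenerative argument for the sojourn times.
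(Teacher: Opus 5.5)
Your proposal is correct. It shares the paper's reduction to the per-policy identity $C_{a,q}(\lambda)=\lambda\,C_{a,w}(\lambda)$, but it proves that identity by a different mechanism.

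\textbf{The paper's route.} The paper argues in expectation. It starts the system from the stationary law $\pi_0$ and applies the finite-horizon Little's law $L_T=\lambda_T W_T$. It then takes expectations using $\mathbb{E}[L_T]=N\,\mathbb{E}_{\pi_0}[c_q(s)]$ and $\mathbb{E}[W_T]=\mathbb{E}_{\pi_0}[c_w(s)]$, and invokes $\lambda_T\to N\lambda$ a.s. The identification of $C_{0,q}(\lambda)$ and $C_{0,w}(\lambda)$ with stationary means is attributed to ergodicity, citing the literature.

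\textbf{Your route.} You argue pathwise. You establish positive recurrence directly, use PASTA to turn the arrival-average defining $C_{a,q}(\lambda)$ into the time-average $L/N$, and then apply the almost-sure sample-path Little's law $L=N\lambda W$ with $W=C_{a,w}(\lambda)$.

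\textbf{Trade-offs.} The paper's route is shorter and matches the textbook stationary form of $L=\lambda W$. Yours makes explicit several steps the paper passes over:
\begin{itemize}
\item The identification $C_{0,q}(\lambda)=\mathbb{E}_{\pi_0}[c_q(s)]$ is really a PASTA statement, because $C_{0,q}$ samples the state at arrival epochs, not at fixed times.
\item Replacing $\mathbb{E}[\lambda_T W_T]$ by $N\lambda\,\mathbb{E}[W_T]$ needs an interchange argument, since $\lambda_T$ and $W_T$ are dependent.
\item $\mathbb{E}[W_T]=\mathbb{E}_{\pi_0}[c_w(s)]$ is a Palm-type claim about an average over a random number of jobs.
\end{itemize}
Your sample-path version never takes expectations of products or ratios. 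It also deals directly with the fact, noted in your Step 4, that $c_w$ is not a function of the state.

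\textbf{Minor points.}
\begin{itemize}
\item Under FIFO, a job's response time depends only on its own service requirement and those of jobs already ahead of it. It does not depend on the routing of later arrivals, which join behind it. This does not affect your argument.
\item In Step 4, the direction ``$L$ and the arrival rate exist $\Rightarrow$ $W$ exists'' needs the hypothesis $W_j/t_j\to 0$. Under FIFO this amounts to your condition that the workload found by arrivals is $o(t_j)$. You can derive it from $\sum_i q_i(T)=o(T)$ a.s., which follows from $\int_0^T\sum_i q_i(t)\,dt=O(T)$ and departure rates bounded by $N$, together with the memoryless Erlang structure of the remaining work.
\item In Step 1, prefer the Lyapunov route over the comparison with random routing, which would need a separate coupling argument. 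For $V=\sum_i q_i^2$ the drift is at most $-2(1-\lambda)\sum_i q_i+N(1+\lambda)$. This gives both positive recurrence and $\mathbb{E}_{\pi_a}[\sum_i q_i]<\infty$, which you need for $L<\infty$.
\end{itemize}
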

Proof of Proposition \ref{prob:little'slaw} is in Appendix \ref{app:proofs}.
\section{Experimentation and Estimation}
\label{Experimentation}
\subsection{Randomization and the Naive Estimator}

\label{naive est.}
In this paper, we focus on the commonly used Bernoulli experiments, conducted over the time interval $[0,T]$. 
Specifically, given $0 < p < 1$, we apply the control policy (power-of-$d_{1}$ policy) 
to each incoming task independently with probability $1-p$, and apply the treatment 
policy (power-of-$d_{2}$ policy) with probability $p$. For convenience, we refer to 
this randomized assignment mechanism as the {power-of-$d_{1}/d_{2}$ policy}. 
Under this setup, the collected data consists of a sequence of vectors
\begin{equation}
\left\{\left(t_{j},a(t_{j}),\mathbf{l}(t^-_{j},a(t_{j})),\iota(t_{j})\right):\;j=0,1,\cdots\right\}.\label{observe;state;power-of-d1d2}
\end{equation}
Here $t_{j}$ denotes the arrival time of the $j$-th task and $a(t_{j})$ denotes the corresponding action. According to the definition of the power-of-$d$ policy, we are actually unable to fully observe the state $s(t_{j}^-)$ just before  time $t_j$. Instead, we can only partially observe the queue lengths in those $d$ randomly chosen servers. Denote 
$$\mathbf{l}(t^-_{j},a(t_{j}))=(l_1(t^-_j,a(t_{j})),\ldots,l_d(t^-_j,a(t_{j})))$$ as the $d$-dimensional real-valued vector recording the observed queue lengths when the $j$-th task arrives, where $d=d_{1}$ if $a(t_{j})=0$ and $d=d_{2}$ otherwise. Furthermore, $\iota(t_{j})\in\{1,2,\cdots,N\}$ denotes the assignment of the task, i.e., $\iota(t_{j})=i$ means the $j$-th task is assigned to the $i$-th server according to the chosen policy.

Following the definitions of costs in Section \ref{Problem Setting}, we adopt the data-oriented estimations for costs
\begin{align}
\hat{c}_{w,j}:=\mu^{-1}_{\iota(t_{j})}(l_{\iota(t_{j})}(t^-_{j},a(t_{j}))+1),\quad 
\hat{c}_{q,j}:=\frac{1}{d}\sum_{k=1}^{d}l_{k}(t^-_{j},a(t_{j})).\label{def;est;qcost}
\end{align}
Indeed, the estimated response-time-based cost is slightly different from the original response-time-based cost. The reasons that we make this adjustment are as follows. Firstly, the true response time for a task is not observable at its arrival time. Only until the task leaves the system can we document its exact response time. This leads to computational inconvenience and a possible non-Markov property. In contrast, the estimated response-time-based cost can be observed and documented as long as we have an in-hand estimation of the service rates. Secondly, the service rates $\mu$ are known or easy to estimate in practice and the estimated  cost (\ref{def;est;qcost}) also contributes to variance reduction.

For ease of exposition, we denote by $\mathcal{C}=\{j:t_{j}<T, \;a(t_{j})=0\}$ and $\mathcal{T}=\{j:t_{j}<T,\;a(t_{j})=1\}$ the sets of tasks that are assigned by the control policy and the treatment policy, respectively. Furthermore, we denote $n_C=\vert \mathcal{C}\vert$ and $n_T=\vert\mathcal{T}\vert$ the number of tasks that are assigned by the two policies. The naive estimator for the GTE is given by
\begin{align}
\widehat{\mathrm{GTE}}_{\mathrm{NV}} =\frac{1}{n_T} \sum _{j \in \mathcal{T}}\hat{c}_{w,j}  -\frac{1}{n_C} \sum _{j \in \mathcal{C}}\hat{c}_{w,j} . 
\label{def nv est.}
\end{align}

Heuristically, the naive estimator measures the empirical difference in average 
response times between the control and treatment policies, based on sample paths 
generated under the mixing policy. Since $d_{2} < d_{1}$, the GTE is positive. However, because of ignoring interference effects, the naive estimator introduces systematic bias and consequently underestimates the treatment effect. In particular, when estimating the average response 
time under the power-of-$d_{2}$ policy using  
$
    \frac{1}{n_T} \sum_{j \in \mathcal{T}} \hat{c}_{w,j},
$
the estimator underestimates the true value because queue lengths are shorter 
than they would be under the global treatment regime, owing to the presence of 
the power-of-$d_{1}$ policy. Conversely, the average response time under the 
power-of-$d_{1}$ policy is overestimated. This phenomenon is known in the literature as \textit{Markovian interference} or \textit{intertemporal interference} \citep{farias2022markovian}. The resulting systematic bias underscores the importance of developing alternative estimators with improved accuracy.

\subsection{Differences-in-Q Estimators} 
\label{Differences-in-Q Estimator}
Our next goal is to tackle the Markovian interference and propose better estimations for GTE defined in (\ref{def GTE}). Following \citet{farias2022markovian}, we consider the non-discounted $Q$ function
\begin{equation}
Q (s(t_j^-),a(t_j)): = \sum_{k=j}^\infty\mathbb{E}\left[ \left(c(s(t^-_{k}),a(t_{k}))-\bar{c}\right)\right],
\label{def;true;Q}
\end{equation}
where $c(\cdot,\cdot)$ represents $c_w(\cdot,\cdot)$ or $c_q(\cdot,\cdot)$ and $\bar{c}$ is the long run average cost under the mixing policy. 
Due to the Markov property of the system, the Q-function in (\ref{def;true;Q}) is well-defined in the sense that $$Q (s(t_j^-),a(t_j))=Q (s(t_j'^-),a(t_j')) \text{ if } s(t^-_j)=s(t_j'^-) \text{ and } a(t_j)=a(t_j').$$ 
The finiteness of $Q(\cdot,\cdot)$ is due to the rapid mixing property of the underlying system \citep[Theorem1.1]{luczak2006maximum}.
Then, \citet{farias2022markovian} propose using
\begin{equation}
    \widehat{\mathrm{GTE}}_{\mathrm{DQ}} := 
    \frac{1}{n_T} \sum_{j \in \mathcal{T}} Q(s(t_j^-), 1) 
    - \frac{1}{n_C} \sum_{j \in \mathcal{C}} Q(s(t_j^-), 0) 
\end{equation}
to estimate the global treatment effect (GTE). \citet{farias2022markovian} further show that $\widehat{\mathrm{GTE}}_{\mathrm{DQ}}$ can significantly reduce the bias compared to the naive estimator, $\widehat{\mathrm{GTE}}_{\mathrm{NV}}$.  Intuitively, the DQ estimators account not only for the immediate cost difference between the two policies, but also for all future downstream effects caused by the different state transitions that each policy induces. In other words, they capture both the direct reward difference and the subsequent impact of state evolution.

For example, in our queueing model, the control policy uses power-of-$d_{1}$ with $d_{1} > d_{2}$, which tends to leave the system in a more balanced queue-length state than the treatment policy. Consequently, immediately after a control job, the queue-length vector is typically more even than after a treatment period. This difference in system state then evolves over time and manifests in quantities such as
\[
c\big(s_{j+1}^{-}, a(t_{j+1})\big), \quad 
c\big(s_{j+2}^{-}, a(t_{j+2})\big), \quad \ldots
\]
As a result, the naive estimator conflates the contemporaneous treatment effect with these downstream state effects. In contrast, the DQ estimator explicitly accounts for such intertemporal dependence and therefore yields a substantially less biased estimate than the naive approach.

In fact, \citet{farias2022markovian} show that the bias of the DQ estimator is second order in the magnitude of the intervention, whereas the naive estimator can exhibit bias of the same order as the intervention itself. To estimate $Q (s(t_j^-),a(t_j))$, we adopt the simplest Monte Carlo estimation.  The Queue-length Q and response-time Q functions are defined as 
\begin{align}
\hat{Q}_{q,j}^L=\sum_{k=0}^{L} \left(\hat{c}_{q,j+k}-\bar{c}\right),\quad \hat{Q}_{w,j}^L=\sum_{k=0}^{L} \left(\hat{c}_{w,j+k}-\bar{c}\right)\label{def;qq;hat}.
\end{align}
Here, $L$ serves as a hyperparameter that truncates the infinite sum in the original definition of the Q-functions. We then propose queue-length-based and response-time-based Differences-in-Q (DQ) estimators to estimate the GTE:
\begin{align}
\widehat{DQ}_{q}^L& =\frac{1}{n_T\lambda} \sum _{j \in \mathcal{T}}\hat{Q}_{q,j}^L -\frac{1}{n_C\lambda} \sum _{j \in \mathcal{C}}\hat{Q}_{q,j}^L  \nonumber\\
&=\frac{1}{n_T \lambda} \sum _{j \in \mathcal{T}}\sum_{k=0}^{L} \hat{c}_{q,j+k}  -\frac{1}{n_C \lambda} \sum _{j \in \mathcal{C}}\sum_{k=0}^{L} \hat{c}_{q,j+k},\label{def;qdq}\\
\widehat{DQ}_{w}^L &=\frac{1}{n_T} \sum _{j \in \mathcal{T}}\hat{Q}_{w,j}^L  -\frac{1}{n_C} \sum _{j \in \mathcal{C}}\hat{Q}_{w,j}^L  \nonumber\\
&=\frac{1}{n_T} \sum _{j \in \mathcal{T}}\sum_{k=0}^{L} \hat{c}_{w,j+k}  -\frac{1}{n_C} \sum _{j \in \mathcal{C}}\sum_{k=0}^{L} \hat{c}_{w,j+k}.\label{def;wdq}
\end{align}
Here, the normalization term $\frac{1}{\lambda}$ in (\ref{def;qdq}) comes from Little's Law, as introduced in Appendix \ref{The Little's Law} and it is easy to estimate as detailed in Appendix \ref{subsec:DQ-implement}.  Note that we do not need to estimate $\bar{c}$, as it cancels out in the treatment effect estimation. 
Then, Little's law further indicates that $\widehat{DQ}_{w}^L$ and $\widehat{DQ}_{q}^L$ should have similar expectations.

\subsection{Mixed Differences-in-Q Estimator}
\label{Mixed Differences-in-Q Estimator}
Although the DQ estimators proposed in Section \ref{Differences-in-Q Estimator} greatly reduce bias, both types of the Differences-in-Q estimators suffer from the large variances accumulated by the sum of $L+1$ subsequent costs. For instance, numerical results from Table \ref{homo;experiment;data;1} suggest that the standard deviations of queue-length-based and response-time-based Differences-in-Q estimators are generally 50-500 times larger than the standard deviation of the naive estimator.  Therefore, it is necessary to investigate a method to reduce variance.

Inspired by Little's Law, we combine queue-length-based Q-functions with response-time-based Q-functions to address the high variance problem. Empirically, we observe that the queue-length-based Q-functions and the response-time-based Q-functions are strongly correlated. Hence, we apply `control variate' technique to reduce variance \citep{ross2022simulation}. Specifically, we define mixed Q-functions as follows:
\begin{equation}
    \hat{Q}_{\rm mix}^{\alpha,L}=\alpha \hat{Q}_{w}^L+(1-\alpha)\frac{\hat{Q}_{q}^L}{\lambda},
    \label{def;mixQ}
\end{equation}
where $\alpha$ is a weight that we will optimize to minimize the variance. 
The variance of the mixed Q-function can be explicitly calculated as a quadratic polynomial of $\alpha$:
\begin{align} 
    &\mathrm{Var}(\hat{Q}_{\rm mix}^{\alpha,L}) \\
    =&\alpha^{2}\mathrm{Var}(\hat{Q}_{w}^L)+\frac{(\alpha-1)^{2}}{\lambda^{2}}\mathrm{Var}(\hat{Q}_{q}^L)-\frac{2\alpha(\alpha-1)}{\lambda}\mathrm{Cov}(\hat{Q}_{w}^L,\hat{Q}_{q}^L)\nonumber\\
    =&\alpha^{2}\left(\mathrm{Var}(\hat{Q}_{w}^L)+\frac{\mathrm{Var}(\hat{Q}_{q}^L)}{\lambda^{2}}-\frac{2\mathrm{Cov}(\hat{Q}_{w}^L,\hat{Q}_{q}^L)}{\lambda}\right)\nonumber\\
    &+2\alpha\left(\frac{\mathrm{Cov}(\hat{Q}_{w}^L,\hat{Q}_{q}^L)}{\lambda}-\frac{\mathrm{Var}(\hat{Q}_{q}^L)}{\lambda^{2}}\right)+\frac{\mathrm{Var}(\hat{Q}_{q}^L)}{\lambda^{2}}.
    \label{var;mixQ}
\end{align}
This representation leads to an optimal choice of $\alpha$, i.e.,
\begin{align}
    \alpha^{*}:=&\underset{\alpha}{\mathrm{argmin}} \{\mathrm{Var}(\hat{Q}_{\rm mix}^{\alpha,L})\} \\
    =&\frac{\mathrm{Var}(\hat{Q}_{q}^L)-\lambda\mathrm{Cov}(\hat{Q}_{w}^L,\hat{Q}_{q}^L)}{\lambda^{2}\mathrm{Var}(\hat{Q}_{w}^L)+\mathrm{Var}(\hat{Q}_{q}^L)-2\lambda\mathrm{Cov}(\hat{Q}_{w}^L,\hat{Q}_{q}^L)}.
    \label{optimal;k}
\end{align}
To estimate $\alpha^{*}$, we estimate the variance and covariance in (\ref{optimal;k}) using their sample version. Substituting the estimated optimal value of $\alpha$ (\ref{optimal;k}) and the truncated Q-functions (\ref{def;qq;hat}) into (\ref{def;mixQ}) yields the estimated optimal mixed Q-functions for the $j$-th task
\begin{equation}
\hat{Q}_{\mathrm{mix},j}^{\hat{\alpha},L}=\hat{\alpha}\hat{Q}_{w,j}^L-\frac{\hat{\alpha}-1}{\lambda} \hat{Q}_{q,j}^L.\label{def;qmix;hat}
\end{equation}
Hence, we obtain the optimal mixed Differences-in-Q estimator in the pre-limiting system
\begin{equation}
\widehat{DQ}_{\rm mix}^L=\frac{1}{n_T}\sum_{j \in \mathcal{T}} \hat{Q}_{\mathrm{mix},j}^{\hat{\alpha},L} - \frac{1}{n_C}\sum_{j \in \mathcal{C}} \hat{Q}_{\mathrm{mix},j}^{\hat{\alpha},L}.\label{def;mixdq}
\end{equation}
It is easy to see that $\widehat{DQ}_{\rm mix}^L$ retains the bias-reduction properties of both the response-time-based and queue-length-based DQ estimators.

\section{Numerical Results}
\label{Numerical Results}
We conduct numerical experiments to show the efficacy of Differences-in-Q estimators in estimating GTE. Unless otherwise declared, we choose $N=20$ and $p=0.5$ throughout the following experiments. The total simulation time in each experiment is chosen to be $10^{6}$, which also means there are approximately $NT\lambda= 2\lambda \times 10^{7}$ jobs arriving to the many-server system during each experiment. The truncation length is set as $L=\lfloor30N\lambda\rfloor$. We run 100 independent replications to evaluate the bias and variance of different estimators. We summarize homogeneous service time results in Section \ref{homo;res} and leave heterogeneous service time results, as well as the formulation of the group estimator, in Section \ref{heter;res}. 
Section \ref{non-sta;res} includes non-stationary arrival rate experiments and Section \ref{sec:switchback} compares our proposed method with the switchback design under the non-stationary setting.  Section \ref{app:delay} induces experiments with communication delay. 
Section \ref{sec:numerical:general} conducts the experiments for non-exponential service time, including both constant service time experiments and heavy-tailed service time experiments. Finally, Section \ref{sec:numerical:truncation} presents an ablation study on different truncation lengths and shows that our results are not sensitive to this choice. 
We primarily focus on the power-of-$d_1/d_2$ experiments with $d_1=3$ and $d_2=2$. Additional results from the power-of-5/3 experiments are presented in \ref{Power-of-5/3 Experiments;appendix}.
Inspired by \cite{farias2023correcting}, we also propose the doubly robust Differences-in-Q estimators in Section \ref{Doubly Robust Differences-in-Q Estimator} and compare their efficiency with the Differences-in-Q estimators. Finally, Section \ref{Experiments for Other Load Balancing Policies} includes experiments concerning other well-known load balancing policies. 

\subsection{Homogeneous Service Time}
\label{homo;res}

\begin{table}[!htbp]
\centering
\begin{threeparttable}
\caption{Power-of-$d_1/d_2$ experiments with $d_{1}=3$, $d_{2}=2$ in the homogeneous service time setting. }
\begin{tabular}{ccccccc}\toprule
\label{homo;experiment;data;1}
$\lambda$ & Index &GTE & Naive & qDQ & wDQ & mixDQ \\\midrule
  $\lambda=0.7$ & Est. & 0.252 & 0.208 & 0.253 & 0.251 & 0.249\\
&Std. Dev. & & $3.77 \times 10^{-4}$ & 0.040 & 0.021 & 0.009\\
\midrule
  $\lambda=0.8$ & Est. & 0.358 & 0.254 & 0.348 & 0.348 & 0.348\\
 & Std. Dev. & & $4.92 \times 10^{-4}$ & 0.082 & 0.053 & {0.016}\\
 \midrule
 $\lambda=0.85$ & Est. & 0.439 & 0.283 & 0.407 & 0.414 & \textbf{0.430}\\
& Std. Dev. & & $5.37 \times 10^{-4}$ & 0.097 & 0.070 & 0.026\\
\midrule
 $\lambda=0.9$ & Est. & 0.566 & 0.316 & 0.465 & 0.481 & \textbf{0.540}\\
& Std. Dev. & & $5.00 \times 10^{-4}$ & 0.152 & 0.121 & 0.042\\
\midrule
 $\lambda=0.95$ & Est. & 0.797 & 0.360 & 0.435 & 0.471 & \textbf{0.737} \\
& Std. Dev. & & $8.19 \times 10^{-4}$ & 0.351 & 0.315 & 0.115\\
\bottomrule
\end{tabular}
\begin{tablenotes}
\footnotesize
 \item In the \textquoteleft{}Index' column, we abbreviate \textquoteleft{}Estimator' as `Est.' and \textquoteleft{}Standard Deviation' as `Std. Dev.'. In the first row, \textquoteleft{}qDQ', \textquoteleft{}wDQ' and `mixDQ' refer to the queue-length-based, the response-time-based and the mixed DQ estimators, respectively, as defined in (\ref{def;qdq}), (\ref{def;wdq}) and (\ref{def;mixdq}).   
\end{tablenotes}
\end{threeparttable}
\end{table}

\begin{figure}[!htbp]
    \centering
    \subfigure[$\lambda=0.7$]{\includegraphics[width=0.45\linewidth]{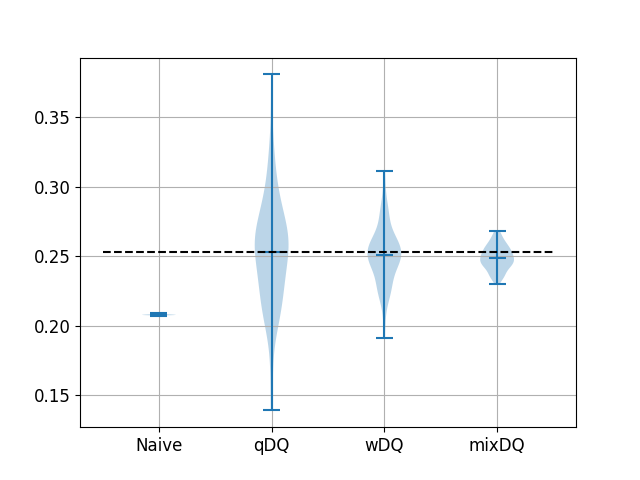}
    \label{Power-of-3/2 Experiments with lambda=0.7}}
    \subfigure[$\lambda=0.8$]{\includegraphics[width=0.45\linewidth]{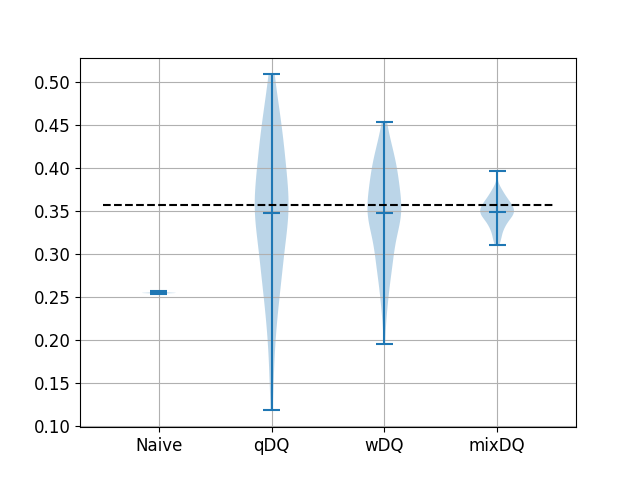}}
    \subfigure[$\lambda=0.85$]{\includegraphics[width=0.45\linewidth]{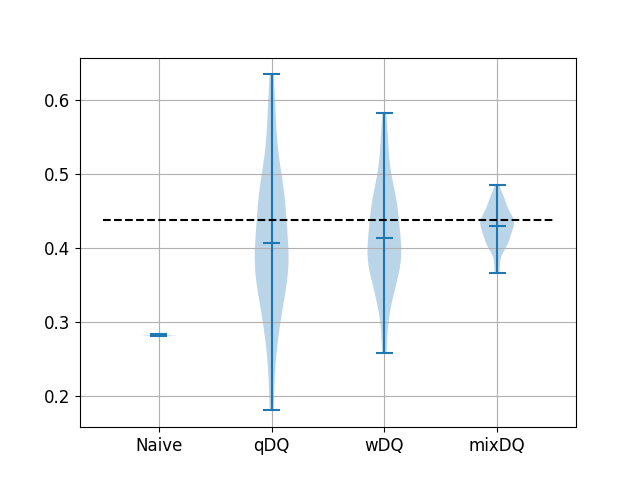}}
    \subfigure[$\lambda=0.9$]{\includegraphics[width=0.45\linewidth]{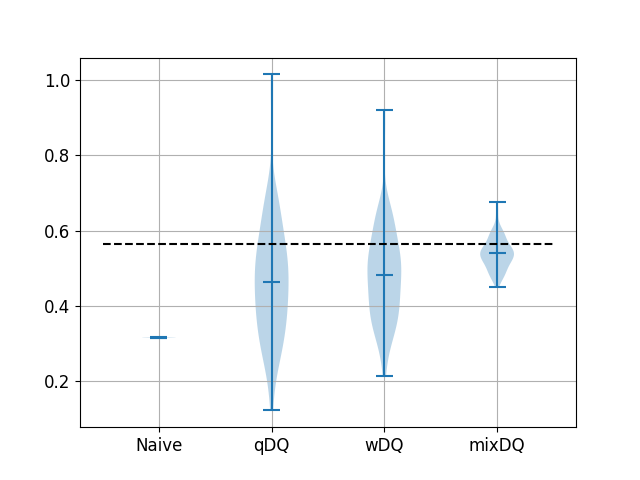}}
    \subfigure[$\lambda=0.95$]{\includegraphics[width=0.45\linewidth]{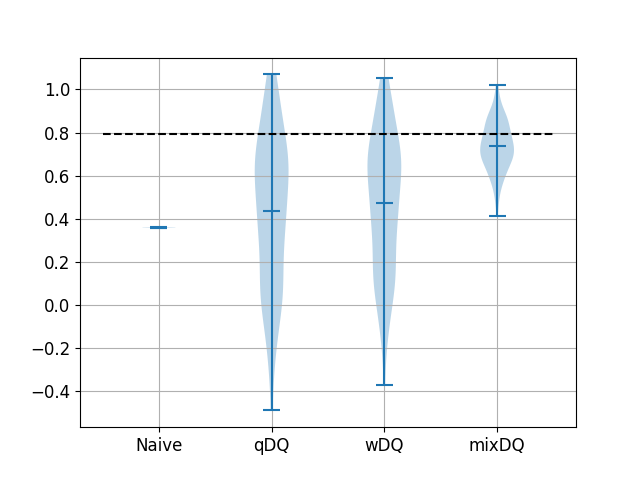}\label{Power-of-3/2 Experiments with lambda=0.95}}
    \subfigure[MSE Graph]{\includegraphics[width=0.45\linewidth]{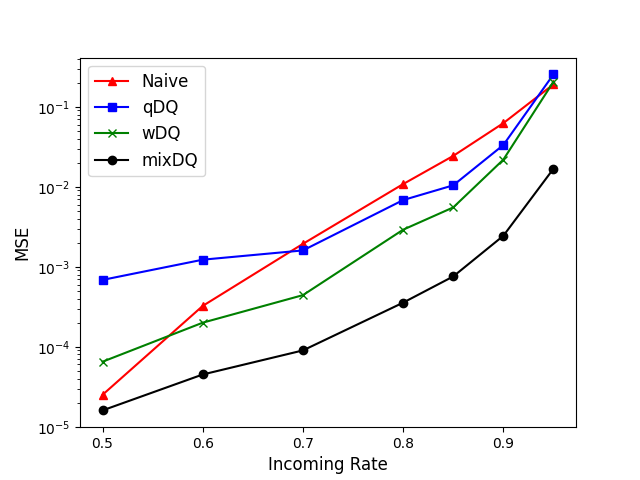}\label{Power-of-3/2 Experiments}}
    \label{Power-of-3/2 Experiments-full}
    \caption{Estimator performance in the homogeneous service-time setting for the power-of-$3/2$ experiment. Panels (a)--(e) show estimator distributions across arrival rates, with dashed lines indicating the ground-truth GTE. Panel (f) compares the MSE of different estimators.}
    \label{fig:homo}
\end{figure}
We present a direct comparison of several estimators in Table \ref{homo;experiment;data;1}. These include the naive estimator, queue-length-based, response-time-based  and mixed DQ estimator. To facilitate a clear understanding of our results,  we present violin curves for experiments with $\lambda \in \{0.7,0.8,0.85,0.9,0.95\}$ in Figures \ref{Power-of-3/2 Experiments with lambda=0.7}-\ref{Power-of-3/2 Experiments with lambda=0.95}. Moreover, we draw the mean squared error (MSE) graphically in Figure \ref{Power-of-3/2 Experiments}.

For ease of exposition, we abbreviate `estimator' as `Est.' and `standard deviation' as `Std. Dev.' in table indices, where the estimator is reported as the sample average of 100 independent replications.  As can be seen from Table \ref{homo;experiment;data;1}, the naive estimator demonstrates the largest bias yet the smallest variance among the estimators. The queue-length-based and response-time-based DQ estimators exhibit reductions in bias but at the expense of increased variances. Moreover, the mixed DQ estimator notably reduces variance through the utilization of the `control variate' technique. Figure \ref{fig:homo} further confirms that the mean squared error (MSE) of our mixed DQ estimator is the lowest among all estimators across various arrival rates. 
The mixed DQ estimator therefore turns out to be a powerful tool to estimate the GTE in a real-world queue system.

In Table \ref{homo;experiment;data;1} and Figure \ref{fig:homo}, we observe that for small values of $\lambda$, all DQ estimators demonstrate effective bias reduction. For relatively large $\lambda$, the mixed DQ estimator not only reduces variance but also, unexpectedly, minimizes bias. This behavior is attributed to the truncation parameter $L$, which introduces some truncation bias. Although Little’s Law suggests that the response-time-based, queue-length-based, and mixed DQ estimators exhibit similar biases, empirical results indicate that the response-time-based estimator converges faster than the queue-length-based estimator as $L \to \infty$. Furthermore, we observe that $\hat{\alpha} > 1$. Consequently, the mixed DQ estimator exhibits the fastest convergence and the smallest bias among all three DQ estimator types.

\subsection{Heterogeneous Service Time}
In this subsection, we conduct numerical experiments to evaluate our estimators under conditions of heterogeneous service rates. Throughout this subsection, we set the number of servers to $N=20$ and the service rates to range from $0.90$ to $1.09$ in increments of $0.01$. Prior to presenting numerical data, we introduce the group estimator as follows.
\label{heter;res}
\begin{figure}[!htbp]
    \centering
    \subfigure[$\lambda=0.85$]{\includegraphics[width=0.48\linewidth]{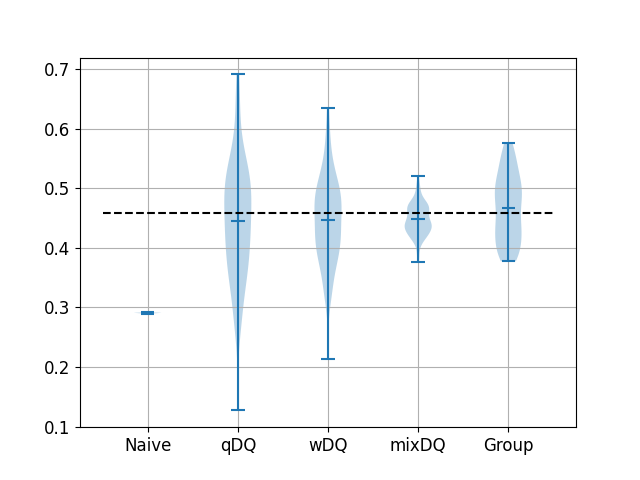}}
    \subfigure[$\lambda=0.9$]{\includegraphics[width=0.48\linewidth]{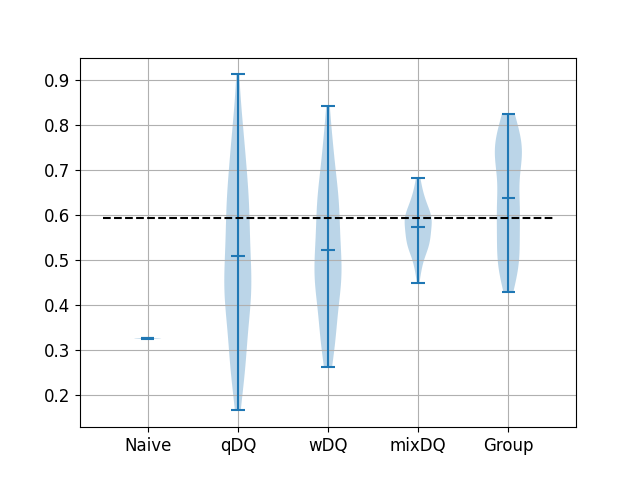}}
    \subfigure[$\lambda=0.95$]{\includegraphics[width=0.48\linewidth]{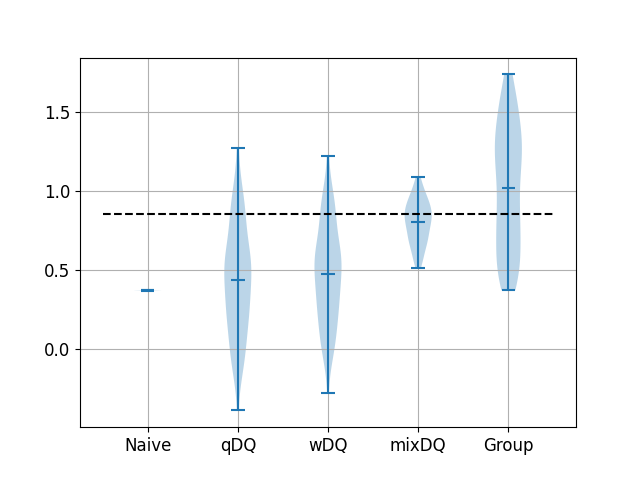}
    }
    \subfigure[MSE Graph]{\includegraphics[width=0.48\linewidth]{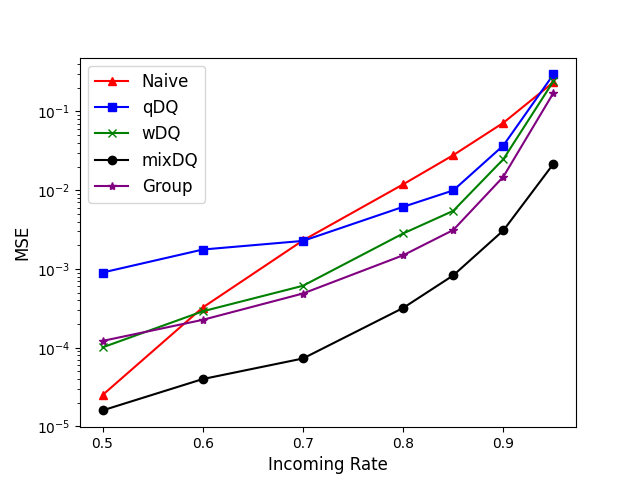}}
    \caption{Estimator performance in the heterogeneous service-time setting for the power-of-$3/2$ experiment. Panels (a)--(c) show estimator distributions across arrival rates, with dashed lines indicating the ground-truth GTE. Panel (d) compares the MSE of different estimators.}
    \label{Heterogeneous Experiments}
\end{figure} 

\begin{table*}[!htbp]
\centering
\begin{threeparttable}
\caption{Power-of-$3/2$ experiments  in the heterogeneous service time setting.}
\begin{tabular}{cccccccccc}\toprule
\label{heter;data;0}
&$\lambda$ & Index & GTE & Naive & Group & qDQ & wDQ & mixDQ  & \\\midrule
&$\lambda=0.85$ & Est. & 0.458 & 0.291 & 0.468 & 0.446 & 0.446 & \textbf{0.448}\\
& Std. Dev. & & $5.46 \times 10^{-4}$ & 0.055 & 0.099 & 0.073 & 0.027\\
\midrule
& $\lambda=0.9$& Est. & 0.595 & 0.327 & 0.638 & 0.509 & 0.523 & \textbf{0.573}\\
& Std. Dev. & & $6.35 \times 10^{-4}$ & 0.114 & 0.172 & 0.140 &0.051\\
\midrule
&$\lambda=0.95$ & Est. & 0.859 & 0.374 & 1.019 & 0.437 & 0.478 & \textbf{0.808} \\
& Std. Dev. & & $8.61 \times 10^{-4}$ & 0.383 & 0.343 & 0.312 & 0.138\\
\bottomrule
\end{tabular}
\begin{tablenotes}
\footnotesize
 \item In the \textquoteleft{}Index' column, we abbreviate \textquoteleft{}Estimator' as `Est.' and \textquoteleft{}Standard Deviation' as `Std. Dev.'. In the first row, \textquoteleft{}qDQ', \textquoteleft{}wDQ' and `mixDQ' refer to the queue-length-based, the response-time-based and the mixed DQ estimators, respectively, as defined in (\ref{def;qdq}), (\ref{def;wdq}) and (\ref{def;mixdq}). \textquoteleft{}Group' refers to the group estimator defined in (\ref{group estimator}).
\end{tablenotes}
\end{threeparttable}
\end{table*}

When constructing the group estimator, we initially partition all the servers randomly into two equal-sized groups, labeled as Group 1 and Group 2 (assuming $N$ is even). This partitioning is performed at the beginning of each experiment and remains consistent throughout the replication. Upon the arrival of a new job, it will be assigned to Group 2 using the power-of-$d_{2}$ policy with probability $p$ and to Group 1 using the power-of-$d_{1}$ policy otherwise. Average response times are then calculated separately for each group, yielding the group estimator:
\begin{equation}\widehat{\rm GTE}_{\rm group} =\frac{1}{n_T}\sum _{j \in \mathcal{T}}\hat{c}_{w,j} -\frac{1}{n_C}\sum _{j \in \mathcal{C}}\hat{c}_{w,j}, \label{group estimator}
\end{equation}
It's important to highlight that the group experiments do not suffer from the previously discussed Markovian interference, as distinct policies are applied to separate sets of servers. Our simulation results, outlined in Table \ref{heter;data;0} and plotted in Figure \ref{Heterogeneous Experiments}, encompass the naive estimator, the group estimator, and three types of the DQ estimators.

Consistent with other experiments, the naive estimator shows the highest bias but the lowest variance among the estimators. The queue-length-based and response-time-based DQ estimators demonstrate reduced bias, albeit with increased variance.  Moreover, the mixed DQ estimator notably reduces variance and bias. Although the group estimator displays a comparatively modest bias compared to the naive estimator, it manifests the highest variance among all the estimators. It's also worth noting that prolonging the simulation time won't alleviate the variance of the group estimator since the separation of the two groups occurs independently and randomly at the beginning of each replication, introducing a significant but non-vanishing variance. Overall, the mixed DQ estimator maintains its effectiveness even under conditions of heterogeneous service rates.

\subsection{Non-stationary Arrival Rate Results}
\label{non-sta;res}
\begin{table*}[!htbp]
\centering
\begin{threeparttable}
\caption{Power-of-$3/2$ experiments with  non-stationary arrivals.}
\label{tab:non-stationary}
\begin{tabular}{ccccccc}\toprule
\label{non-stationary;experiment;data}
$\lambda$ & Index & GTE & Naive & qDQ & wDQ & mixDQ \\\midrule
 $\lambda(t)=0.9+0.15\sin (t)$ & Est. & 0.561 & 0.316 & 0.465 & 0.482 & \textbf{0.547}\\
& Std. Dev. & & $6.04 \times 10^{-4}$ & 0.134 & 0.108 & 0.044\\
 & MSE & & 0.060 & 0.027 & 0.018 & \textbf{0.002}
 \\
\bottomrule
\end{tabular}
\begin{tablenotes}
\footnotesize
 \item In the \textquoteleft{}Index' column, we abbreviate \textquoteleft{}Estimator' as `Est.' and \textquoteleft{}Standard Deviation' as `Std. Dev.'. In the first row, \textquoteleft{}qDQ', \textquoteleft{}wDQ' and `mixDQ' refer to the queue-length-based, the response-time-based and the mixed DQ estimators, respectively, as defined in (\ref{def;qdq}), (\ref{def;wdq}) and (\ref{def;mixdq}).   
\end{tablenotes}
\end{threeparttable}
\end{table*}

\begin{figure}[!htbp]
    \centering
\includegraphics[width=0.5\textwidth]{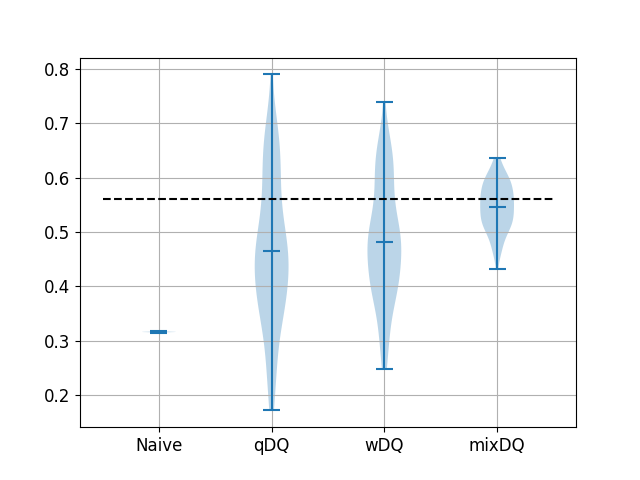}
    \caption{Estimator performance in the non-stationary arrival rate setting for the power-of-$3/2$ experiment.}
    \label{Non-stationary Experiments}
\end{figure}
We implement non-stationary arrival rate experiments and compare the efficiency of both naive estimator and DQ estimators. We set the arrival rate to vary in a cyclic way, i.e, $\lambda(t)=0.9+0.15\sin(t)$. Numerical data are documented and exhibited in Table \ref{tab:non-stationary} and Figure \ref{Non-stationary Experiments} below. 
Similar to other experimental results, our proposed mixed DQ estimators consistently outperform other estimators, exhibiting low bias and variance. Additionally, they achieve the lowest MSE among all estimators.

\subsection{Comparison with Switchback Experiments}
\label{sec:switchback}

We include switchback experiments as a baseline in the non-stationary setting of Section \ref{non-sta;res}, where the ground-truth GTE is $0.561$. In a switchback experiment, the system alternates between the control and treatment policies over pre-specified time windows. We evaluate switchback designs with window lengths of $10$, $50$, and $100$. The results are shown in Table \ref{tab:switchback}.

\begin{table}[!htbp]
\centering
\begin{threeparttable}
\caption{Comparison between the mixed DQ estimator and switchback baselines in the non-stationary arrival setting.}
\label{tab:switchback}
\begin{tabular}{ccccc}
\toprule
Index & mixDQ & SW $t=10$ & SW $t=50$ & SW $t=100$ \\
\midrule
Est. & 0.547 & 0.262 & 0.440 & 0.499 \\
Std. Dev. & 0.044 & 0.004 & 0.007 & 0.007 \\
MSE & \textbf{0.002} & 0.089 & 0.015 & 0.004 \\
\bottomrule
\end{tabular}
\begin{tablenotes}
\footnotesize
\item In the \textquoteleft{}Index' column, we abbreviate \textquoteleft{}Estimator' as `Est.' and \textquoteleft{}Standard Deviation' as `Std. Dev.'. In the \textquoteleft{}Index' row,  `SW $t=10$', `SW $t=50$', and `SW $t=100$' denote switchback experiments with window lengths $10$, $50$, and $100$, respectively. The ground-truth GTE is $0.561$.
\end{tablenotes}
\end{threeparttable}
\end{table}

The results show that the mixed DQ estimator outperforms the switchback designs in terms of MSE. We also note that the non-stationary environment $\lambda(t)=0.9+0.15\sin(t)$ is periodic, which inherently favors switchback designs. That said, we expect the mixed DQ estimator to exhibit even stronger relative performance in more realistic, non-periodic settings.

\subsection{Experiments with Communication Delay}
\label{app:delay}
In the most context of this paper, we have assumed the dispatcher to have super ability, i.e, the dispatcher can assign an incoming job to one of the servers according to some specific load balancing policies without delay. However, the information communication between the servers and the dispatcher may result in delay of job assignments. Below in this section, we present both the experiment design and numerical results for the Power-of-$d_1$/$d_2$ experiments with delay. Instead of adopting the periodic information update model from \cite{mitzenmacher1997useful}, which assumes periodic information updates from the servers, we adopt an alternative random setting. Once a job arrives in the system, it will be stored in the dispatcher, and the d randomly chosen servers will be specified instantaneously. Here $d=d_1$ or $d_2$ depends on the polices chosen for this job. The job will wait at the dispatcher until all the $d$ servers send their queue information to the dispatcher. We suppose the delay time for the d servers are $q_1,q_2,\cdots,q_d$, where $\{q_1,q_2,\cdots,q_d\}$ are i.i.d. exponential random variables with unit rate. The total delay time for this job is defined by $\max \{q_1,q_2,\cdots,q_d\}$.
After getting the queue information from the d servers, the job will be assigned to the shortest server and leaves the dispatcher. 

\begin{figure}[!htbp]
    \centering
    \subfigure[$\lambda=0.95$]{\includegraphics[width=0.4\textwidth]{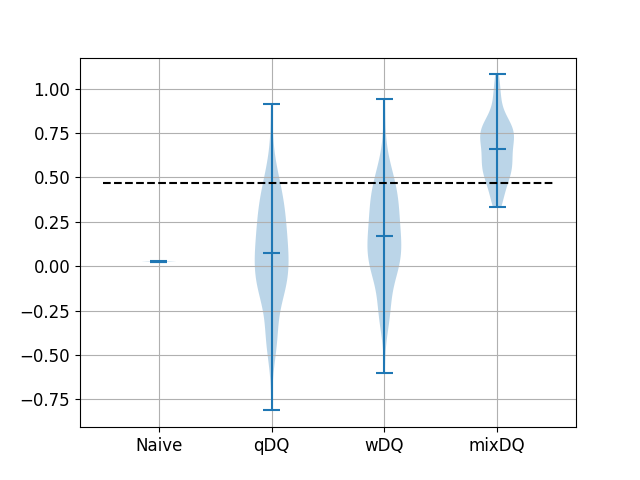}}
    \subfigure[MSE Graph]{\includegraphics[width=0.4\textwidth]{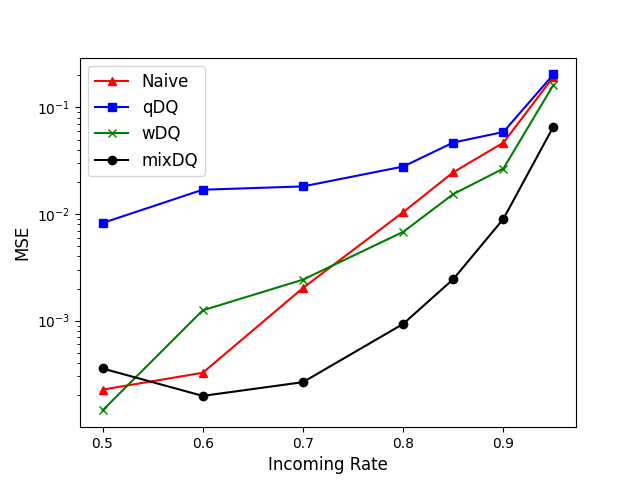}\label{power-of-32-delay Experiments}}    
    \label{power-of-32-delay Experiments with lambda=0.95}
    \hfill
    \caption{Estimator performance in the communication delay setting for the power-of-$3/2$ experiment. Panels (a) shows estimator distributions under the $\lambda=0.95$ arrival rate, with dashed lines indicating the ground-truth GTE. Panel (b) compares the MSE of different estimators.}
    \label{power-of-32-delay}
\end{figure}

To fit in the framework of Differences-in-Q estimators, we modify the definition of the response-time-based cost and the queue-length-based cost. Specifically, suppose a job arrives at the dispatcher at time $t$ when the dispatcher has exactly $n$ jobs and leaves the dispatcher at time $t'$. We make the following modifications to the cost functions:
\begin{align}
    \hat{c}_{w}&\leftarrow (t'-t)+\hat{c}_{w},\\
    \hat{c}_{q}&\leftarrow \frac{n}{N}+\hat{c}_{q}.
\end{align}
Then Q-functions and Differences-in-Q estimators are calculated as previous procedures. Below we present the numerical results in Table \ref{power-of-32-delay-data} and Figure \ref{power-of-32-delay}.

\begin{table}[!htbp]
\centering
\begin{threeparttable}
\caption{Power-of-3/2 Experiments with Delay}
\begin{tabular}{ccccccc}\toprule
\label{power-of-32-delay-data}
$\lambda$ & Index & GTE & Naive & qDQ & wDQ & mixDQ \\\midrule
 & Est. & -0.216 & -0.201 & -0.297 & -0.216 & -0.200\\
$\lambda=0.5$& Std. Dev. & & $7.62 \times 10^{-4}$ & 0.041 & 0.012 & 0.010\\
\midrule
 & Est. & -0.147 & -0.165 & -0.269 & -0.178 & \textbf{-0.148}\\
$\lambda=0.6$& Std. Dev. & & $6.88 \times 10^{-4}$ & 0.045 & 0.017 & 0.014\\
\midrule
 & Est. & -0.080 & -0.125 & -0.203 & -0.123 & \textbf{-0.077}\\
$\lambda=0.7$ & Std. Dev. & & $7.18 \times 10^{-4}$ & 0.055 & 0.024 & 0.016\\
\midrule
 & Est. & 0.023 & -0.079 & -0.117 & -0.041 & \textbf{0.039}\\
$\lambda=0.8$& Std. Dev. & & $7.35 \times 10^{-4}$ & 0.091 & 0.052 & 0.026\\
\midrule
 & Est. & 0.106 & -0.051 & -0.070 & 0.013 & \textbf{0.141}\\
$\lambda=0.85$& Std. Dev. & & $7.73\times 10^{-4}$ & 0.126 & 0.082 & 0.035\\
\midrule
 & Est. & 0.233 & -0.017 & 0.044 & 0.119 & \textbf{0.308}\\
$\lambda=0.9$& Std. Dev. & & $7.43 \times 10^{-4}$ & 0.152 & 0.117 & 0.058\\
\midrule
 & Est. & 0.466 & 0.026 & 0.131 & 0.167 & \textbf{0.660} \\
$\lambda=0.95$& Std. Dev. & & $1.02 \times 10^{-3}$ & 0.303 & 0.271 & 0.169\\
\bottomrule
\end{tabular}
\begin{tablenotes}
\footnotesize
 \item In the \textquoteleft{}Index' column, we abbreviate \textquoteleft{}Estimator' as `Est.' and \textquoteleft{}Standard Deviation' as `Std. Dev.'. In the first row, \textquoteleft{}qDQ', \textquoteleft{}wDQ' and `mixDQ' refer to the queue-length-based, the response-time-based and the mixed DQ estimators, respectively, as defined in (\ref{def;qdq}), (\ref{def;wdq}) and (\ref{def;mixdq}).   
\end{tablenotes}
\end{threeparttable}
\end{table}

\subsection{Non-Exponential Service Time Experiments}
\label{sec:numerical:general}

In this section, we evaluate our estimators under non-exponential service-time distributions. We consider two representative settings: constant service times and heavy-tailed service times. These experiments test whether the proposed DQ estimators remain effective beyond the exponential service-time assumption used in the baseline experiments.

\subsubsection{Constant Service Times}
\label{sec:numerical:constant-service}

We first consider a setting where the service time is constant and equal to the inverse of the service rate. This serves as a representative example of a general non-exponential service-time distribution.

\begin{table}[!htbp]
\centering
\begin{threeparttable}
\caption{Power-of-$d_1/d_2$ experiments with $d_1=3$, $d_2=2$ in the constant-service-time setting.}
\label{tab:constant-service-time}
\begin{tabular}{ccccccc}\toprule
$\lambda$ & Index & GTE & Naive & qDQ & wDQ & mixDQ \\\midrule
$\lambda=0.85$ & Est. & 0.201 & 0.132 & 0.191 & 0.193 & \textbf{0.196}\\
& Std. Dev. & & $2.78 \times 10^{-4}$ & 0.041 & 0.026 & 0.009\\
\midrule
$\lambda=0.9$ & Est. & 0.261 & 0.149 & 0.243 & 0.245 & \textbf{0.251}\\
& Std. Dev. & & $2.87 \times 10^{-4}$ & 0.070 & 0.051 & 0.017\\
\midrule
$\lambda=0.95$ & Est. & 0.375 & 0.170 & 0.278 & 0.288 & \textbf{0.348}\\
& Std. Dev. & & $4.10 \times 10^{-4}$ & 0.145 & 0.127 & 0.052\\
\bottomrule
\end{tabular}
\begin{tablenotes}
\footnotesize
 \item In the \textquoteleft{}Index' column, we abbreviate \textquoteleft{}Estimator' as `Est.' and \textquoteleft{}Standard Deviation' as `Std. Dev.'. In the first row, \textquoteleft{}qDQ', \textquoteleft{}wDQ' and `mixDQ' refer to the queue-length-based, the response-time-based, and the mixed DQ estimators, respectively, as defined in (\ref{def;qdq}), (\ref{def;wdq}) and (\ref{def;mixdq}).   
\end{tablenotes}
\end{threeparttable}
\end{table}

From the experimental results in Table \ref{tab:constant-service-time}, we observe that the Differences-in-Q estimators have smaller bias compared to the naive estimator. Furthermore, the mixed Differences-in-Q estimator shows significantly smaller variance compared to the other DQ estimators. This validates the effectiveness of our method in the constant-service-time setting.

\subsubsection{Heavy-Tailed Service Times}
\label{sec:numerical:heavy-tailed}

We further conduct experiments under heavy-tailed service times. Specifically, we follow the experimental setup in Section \ref{homo;res}, but replace the exponential service-time distribution with a scaled Pareto distribution whose cumulative distribution function is
\begin{equation}
    F(x) = 1 - \left(\frac{3}{4x}\right)^4, 
    \qquad x > \frac{3}{4}.
\end{equation}
The scaling constant ensures that the mean service time remains one, matching the previous exponential-service setting. Under this heavy-tailed regime, we focus on the heavy-traffic cases $\lambda \in \{0.9,0.95\}$. The results in Table \ref{tab:heavy-tailed-service-time} correspond to the choice $L=30N\lambda$.

\begin{table}[!htbp]
\centering
\begin{threeparttable}
\caption{Power-of-$d_1/d_2$ experiments with $d_1=3$, $d_2=2$ in the heavy-tailed service-time setting.}
\label{tab:heavy-tailed-service-time}
\begin{tabular}{ccccccc}
\toprule
$\lambda$ & Index & GTE & Naive & qDQ & wDQ & mixDQ \\
\midrule
$\lambda=0.9$ 
& Est. & 0.292 & 0.167 & 0.273 & 0.275 & \textbf{0.279} \\
& Std. Dev. & & $3.80 \times 10^{-4}$ & 0.070 & 0.056 & 0.035 \\
& MSE & & 0.016 & 0.005 & 0.003 & \textbf{0.001} \\
\midrule
$\lambda=0.95$ 
& Est. & 0.418 & 0.191 & 0.315 & 0.326 & \textbf{0.382} \\
& Std. Dev. & & $4.47 \times 10^{-4}$ & 0.145 & 0.131 & 0.093 \\
& MSE & & 0.052 & 0.032 & 0.026 & \textbf{0.010} \\
\bottomrule
\end{tabular}
\begin{tablenotes}
\footnotesize
\item In the \textquoteleft{}Index' column, we abbreviate \textquoteleft{}Estimator' as `Est.' and \textquoteleft{}Standard Deviation' as `Std. Dev.'. In the first row, \textquoteleft{}qDQ', \textquoteleft{}wDQ' and `mixDQ' refer to the queue-length-based, the response-time-based, and the mixed DQ estimators, respectively, as defined in (\ref{def;qdq}), (\ref{def;wdq}) and (\ref{def;mixdq}).
\end{tablenotes}
\end{threeparttable}
\end{table}

The results show that the mixed DQ estimator continues to outperform the alternatives under heavy-tailed service times. Compared with the naive estimator, all DQ estimators substantially reduce bias by accounting for the downstream state effects caused by Markovian interference. Among the DQ estimators, the mixed DQ estimator achieves the smallest MSE in both heavy-traffic settings, while also attaining lower variance than the queue-length-based and response-time-based DQ estimators. These results further confirm that $L=30N\lambda$ remains a sufficient truncation choice under heavy-tailed service times.

Overall, these two experiments show that our proposed mixed DQ estimator remains effective under non-exponential service-time distributions, including both light-tailed deterministic service times and heavy-tailed Pareto service times.
\subsection{Ablations on the Truncation Length}
\label{sec:numerical:truncation}
In this section, we consider the ablation study on the truncation length $L$. We take the power-of-$3/2$ experiments with $\lambda=0.9$ and $N=20$ as an example. We consider four different truncation lengths, i.e., $L=10N\lambda$, $L=30N\lambda$, $L=60N\lambda$ and $L=100N\lambda$. Experimental results are shown below.
\begin{table}[!htbp]
\centering
\caption{Power-of-$3/2$ Experiments with $N=20$ and varying $L$. GTE is 0.566. All data are shown in Est. (Std. Dev.) format. Standard deviation of the naive estimator is omitted, as it is extremely small.}
\begin{tabular}{ccccccc}\toprule
$\lambda=0.9$ & Naive & qDQ & wDQ & mixDQ \\\midrule
$L=10N\lambda$ & 0.316 & 0.391(0.039) & 0.404(0.030) & 0.537(0.012)\\
$L=30N\lambda$ & 0.316 & 0.462(0.067) & 0.479(0.054) & 0.544(0.021)\\
$L=60N\lambda$ & 0.316 & 0.533(0.118) & 0.537(0.097) & 0.550(0.033)\\
$L=100N\lambda$ & 0.316 & 0.557(0.139) & 0.556(0.114) & 0.551(0.046)\\
\bottomrule
\end{tabular}
\end{table}

Under the same experimental settings, we observe the bias-variance trade-off. When the truncation length $L$ is small, the variance is typically lower, but the Q-functions suffer from bias due to insufficient trajectory length. Conversely, when $L$ is large, the Q-function estimates converge better, but the variance increases significantly, which limits the estimator performance. Based on this observation, we select $L = 30N\lambda$ to balance the bias and variance. Additionally, we find that the mixed Differences-in-Q estimator converges faster than its counterparts when $L$ is small. This further demonstrates the effectiveness of our proposed method.

\section{Conclusion}

In this paper, we present an A/B testing framework for load balancing policies and propose a mixed Differences-in-Q estimator to significantly reduce bias and variance in estimating the Global Treatment Effect (GTE). For future work, we aim to rigorously develop inference procedures for our estimator and demonstrate its robustness in more complex data center scenarios, including large language model (LLM) inference.

More broadly, we plan to extend this idea of leveraging domain knowledge to improve evaluation methodologies. For example, in assessing demand-prediction models for inventory control, we aim to exploit the well-known optimality of the 
$s,S$ policy and develop a decision-focused scoring rule—grounded in the newsvendor-type loss—to more accurately evaluate prediction quality in terms of operational performance. 
Furthermore, for applications such as dynamic pricing, we plan to exploit structural properties—such as revenue concavity, value-function convexity, and price monotonicity—to design decision-focused evaluation metrics that more directly capture how prediction errors affect pricing performance.

\bibliographystyle{plainnat}
\bibliography{mybib}

\appendix
\section{Little's Law}
\label{The Little's Law}
In this section, we briefly introduce and summarize Little's law in queueing theory. Little's law, first proposed in \cite{AlanCobham1954_littlelaw,Morse1958Queues_littlelaw} and proved in \cite{Little1961A,whitt1991review,kim2013statistical}, has been studied under different situations for various purposes and in a wide range of fields. In general, Little's Law studies a parallel-server system over finite time period. Suppose the queue length in the current parallel-server system at time t is $l(t)$. The average queue length over time period $[0,T]$, L, is defined by
\begin{equation}
L=\frac{1}{T}\int_{0}^{T}l(t)dt.
\end{equation}
Suppose further that there are $S(T)$ tasks arriving at the parallel-server system during the period $(0,T]$. The average arrival rate, $\lambda$, is defined by
\begin{equation}
    \lambda=\frac{S(T)}{T}.
\end{equation}
For each task, the response time is defined as the gap between its arrival time and departure time. Denote $W_{i}$ the response time of the i-th task. The average response time among these $S(T)$ tasks is defined by
\begin{equation}
    W=\frac{1}{S(T)}\sum_{i=1}^{S(T)}W_{i}.
\end{equation}
Little's Law, however, states that the average queue length equals to the average arrival rate multiplied by the average response time, i.e.,
\begin{equation}
    L=\lambda W.\label{little;law}
\end{equation}

Initially, Little's Law was studied in a parallel-server system with both an empty state at $t=0$ and an empty state at $t=T$. One exemplary application of this system is a supermarket, which opens empty in the morning and closes empty in the evening. A heuristic proof is as follows. Suppose that the arrival time and departure time of the i-th task are $t_{1,i}$ and $t_{2,i}$. Then 
\begin{equation}
    W_{i}=t_{2,i}-t_{1,i}=\int_{0}^{T}\mathbf{1}_{\{t_{1,i}<t<t_{2,i}\}}dt.
\end{equation}
This further leads to
\begin{align}
\lambda W=&\frac{1}{T}\sum_{i=1}^{S(T)} \int_{0}^{T}\mathbf{1}_{\{t_{1,i}<t<t_{2,i}\}}dt \\
=&\frac{1}{T}\int_{0}^{T}\sum_{i=1}^{S(T)}\mathbf{1}_{\{t_{1,i}<t<t_{2,i}\}}dt \\
=&\frac{1}{T}\int_{0}^{T}l(t)dt \\
=&L.
\end{align}
Later, Little's Law was generalized to permit non-zero starting and ending queue lengths. The generalization was constructed at the expense of redefining the average arrival rate to include the tasks that are already in the system at $t=0$. This leads to the general version of Little's Law, as stated in equation (\ref{little;law}). For a more detailed description and overview of the theory and applications of Little's Law, one may refer to \cite{Little2011Little,wolff2011}.

\section{Proof of Proposition \ref{prob:little'slaw}}
\label{app:proofs}
\noindent \textbf{Proof of Proposition \ref{prob:little'slaw}.} In order to prove that the response-time-based and queue-length-based GTE are the same, it suffices to show 
\begin{equation}
    C_{i,q}(\lambda)=\lambda C_{i,w}(\lambda) \text{ for } i=0,1.
\end{equation}
We first consider the $i=0$ case. According to \cite{Mitzenmacher2001power-of-2} and \citet{luczak2006maximum}, the marginal distribution of the process will converge to the stationary distribution as $T\to\infty$, due to the ergodicity. Therefore, we have $C_{0,q}(\lambda)=\mathbb{E}_{s\sim\pi_0}\left[c_q(s)\right]$ and $C_{0,w}(\lambda)=\mathbb{E}_{s\sim\pi_0}\left[c_w(s)\right]$, where $\pi_0$ is the stationary distribution under the control policy. Now consider a parallel queue service system that has initial distribution $\pi_0$ and is implemented with the control policy. For a given period $[0, T]$, we denote $L_T, W_T, \lambda_T$ the average queue length, average response time and average arrival rate, respectively. By Little's law, we have
\begin{equation}
    L_T=\lambda_T W_T.\label{eq:prop1proof}
\end{equation}
In addition, by the definition of stationary distribution, we know all marginal distributions of this system is $\pi_0$. Hence, we have $\mathbb{E}\left[L_T\right]=\mathbb{E}_{s\sim\pi_0}\left[N\sum s_i\right]=N\mathbb{E}_{s\sim\pi_0}\left[c_q(s)\right]$ and $\mathbb{E}\left[W_T\right]=\mathbb{E}_{s\sim\pi_0}\left[c_w(s)\right]$. By taking expectation on both sides of (\ref{eq:prop1proof}), we obtain
\begin{equation}
N\mathbb{E}_{s\sim\pi_0}\left[c_q(s)\right]=N\lambda \mathbb{E}_{s\sim\pi_0}\left[c_w(s)\right].  
\end{equation}
Here, we use the fact that $\lambda_T\overset{a.s.}{\rightarrow}N\lambda$ as $T\to\infty$. This leads to $C_{0,q}(\lambda)=\lambda C_{0,w}(\lambda)$. Similarly, we also have $C_{1,q}(\lambda)=\lambda C_{1,w}(\lambda)$, which yields the proposition.

\section{Implementation}
\label{implementation}

\subsection{Observable Queueing Information for Other Scheduling Policies}
Following Section \ref{naive est.}, we illustrate the observable queueing information for both the MJSQ-$r$ policy and the JIQ-$d$ policy. Suppose that the j-th task is chosen to be assigned by the MJSQ-$r$ policy. At this point, we can still observe the vector  
\begin{equation}
\left\{\left(t_{j},a(t_{j}),\mathbf{l}(t_{j},a(t_{j})),\iota(t_{j}))\right):\;j=0,1,\cdots\right\},\label{observe;state;mjsq}
\end{equation}
where $\mathbf{l}(t_{j},a(t_{j}))$ is the single queue length of the server if the task is randomly assigned to one of the server and $\mathbf{l}(t_{j},a(t_{j}))$ is a $N$-dimensional vector recording the queue lengths from all the servers if the task is assigned by the JSQ policy.

On the other hand, when the $j$-th task is assigned by the JIQ-$d$ policy, the observed vector can be recorded with the same form to (\ref{observe;state;mjsq}). In this case, the vector $\mathbf{l}(t_{j},a(t_{j}))$ is a $d$-dimensional vector if there's no empty server. When there's at least one of the server from the system that sends the empty signal to the central dispatcher, we randomly draw a server from the system and record its queue length in $\mathbf{l}(t_{j},a(t_{j}))$.

\subsection{Differences-in-Q Calculation}
\label{subsec:DQ-implement}
We illustrate our implementation of the Differences-in-Q estimators and the mixed Differences-in-Q estimators proposed in Section \ref{Experimentation}. To address the non-stationary arrival rate experiments, we start by introducing the empirical estimation for the incoming rate $\lambda$. We define
\begin{equation}
    \hat{\lambda}=\frac{n_C+n_T}{T}.
\end{equation}
Clearly, $\hat{\lambda}$ is a consistent estimator for the incoming rate $\lambda$. Next, we recall that the response-time-based cost defined in (\ref{def;est;qcost}) relies directly on the service rates $\mu_{1},\cdots,\mu_{N}$. If the service rates are not a priori known, we provide an estimation method below. Let $A_{i}=\{t_{j}<T:\iota(t_{j})=i\}$ be the set of arrival times in the i-th server. We define
\begin{align}
\hat{\mu}_{i}=\frac{\sum_{t_{j}\in A_{i}}\underset{1\leq k\leq d}{\min}(l_{k}(t_{j},a(t_{j}))+1)}{\sum_{t_{j}\in A_{i}}w(t_{j},a(t_{j}))},\;\;1\leq i\leq N,
\end{align}
where $w(t_{j},a(t_{j}))$ is the response time of the $j$-th job with treatment assignment $a(t_{j})$. 
By the Strong Law of Large Numbers, we obtain
\begin{align}
    \hat{\mu}_{i}\overset{a.s.}{\to}\mu_{i}\;\;as \;T\to\infty,
\end{align}
meaning that these estimations for the service rates are all consistent estimations. We then estimate the response-time-based cost by
\begin{equation}
\hat{c}_{w,j}=\hat{\mu}^{-1}_{n(t_{j})}\underset{1\leq k\leq d}{\min}(l_{k}(t_{j},a(t_{j}))+1).\label{def;cw;hat}
\end{equation}

\subsection{Inference procedure and variance estimation}
In this subsection, we give a method to estimate variance and conduct inference. We adopt the following simpler estimations of the standard errors:
\begin{align}
\widehat{\rm SE}_{w}&:=2\sqrt{\frac{\mathrm{Var}\big(\hat{Q}_{w}\big)}{TN\hat\lambda}},\label{def;Q;se}\\
\widehat{\rm SE}_{q}&:=\frac{2}{\hat{\lambda}}\sqrt{\frac{\mathrm{Var}\big(\hat{Q}_{q}\big)}{TN\hat\lambda}},\\
\widehat{\rm SE}_{\rm mix}&:=2\sqrt{\frac{\mathrm{Var}\big(\hat{Q}_{\rm mix}\big)}{TN\hat\lambda}}.
\end{align}
On the other hand, we also calculate the standard error for the naive estimator and the group estimator (see Section \ref{heter;res} for its construction). Since both of the two estimators don't involve Q-functions, we follow the original definition to calculate
\begin{equation}
    \widehat{\rm SE}:=\sqrt{\frac{\mathrm{Var}\big(\hat{c}_{w}(s,0)\big)}{n_C}+\frac{\mathrm{Var}\big(\hat{c}_{w}(s,1)\big)}{n_T}}.
\end{equation}
 We report the estimated standard error in Section \ref{app:additionalnumerical} and compare it to the standard deviation of the corresponding estimators to show their consistency. Furthermore, we use the estimated standard error to construct asymptotic confidence intervals regarding the Differences-in-Q estimators in the following way. Given confidence rate $100(1-\delta)\%$, we choose $z>0$ such that $\mathbb{P}(-z\leq \mathcal{N}(0,1)\leq z)=1-\delta$. We initiate
\begin{align}
\hat{L}_{N}=\widehat{DQ}^L-z\widehat{\rm SE},\quad
\hat{R}_{N}=\widehat{DQ}^L+z\widehat{\rm SE},
\end{align}
where $\widehat{DQ}^L$ is the value of the Differences-in-Q estimator and $\hat{SE}$ is calculated via (\ref{def;Q;se}). The interval $[\hat{L}_{N},\hat{R}_{N}]$ serves as an asymptotic confidence interval with confidence rate $100(1-\delta)\%$ for the Differences-in-Q estimator.

\section{Additional Numerical Results}

\label{app:additionalnumerical}

\subsection{Power-of-5/3 Experiments in the homogeneous service time setting}
\label{Power-of-5/3 Experiments;appendix}
In this subsection, we replicate the experiments in Section \ref{homo;res} but for power-of-5/3 experiments. We report detailed results for $\lambda\in\{0.5,0.6,0.7,0.8,0.85,0.9,0.95\}$ in Table  \ref{homo;experiment;data;2} and Figure \ref{Power-of-5/3 Experiments}. 
\begin{table}[!htbp]
\centering
\begin{threeparttable}
\caption{Power-of-5/3 experiments in the homogeneous service time setting.}
\begin{tabular}{ccccccc}\toprule
\label{homo;experiment;data;2}
$\lambda$ & Index &GTE & Naive & qDQ & wDQ & mixDQ \\\midrule
 & Est. & 0.092 & 0.096 & 0.092 & 0.092 & 0.091\\
$\lambda=0.5$& Std. Dev. & & $2.21 \times 10^{-4}$ & 0.022 & 0.005 & 0.003\\
& Std. Err. & & $1.80\times 10^{-4}$ & 0.023 & 0.006 & 0.003\\\midrule
 & Est. & 0.132 & 0.137 & 0.133 & 0.132 & 0.132\\
$\lambda=0.6$& Std. Dev. & & $2.69 \times 10^{-4}$ & 0.029 & 0.010 & 0.004\\
& Std. Err. & & $2.16\times 10^{-4}$ & 0.029 & 0.010 & 0.004\\\midrule
 & Est. & 0.180 & 0.177 & 0.182 & 0.180 & 0.178\\
$\lambda=0.7$ & Std. Dev. & & $2.66 \times 10^{-4}$ & 0.038 & 0.019 & 0.007\\
& Std. Err. & & $2.53\times 10^{-4}$ & 0.038 & 0.018 & 0.007\\\midrule
 & Est. & 0.246 & 0.219 & 0.236 & 0.238 & 0.241\\
$\lambda=0.8$& Std. Dev. & & $4.02 \times 10^{-4}$ & 0.055 & 0.033 & 0.013\\
& Std. Err. & & $3.03\times 10^{-4}$ & 0.060 & 0.037 & 0.013\\\midrule
 & Est. & 0.296 & 0.243 & 0.296 & 0.295 & 0.292\\
$\lambda=0.85$& Std. Dev. & & $3.97 \times 10^{-4}$ & 0.080 & 0.058 & 0.019\\
& Std. Err. & & $3.41\times 10^{-4}$ & 0.082 & 0.058 & 0.020\\\midrule
 & Est. & 0.371 & 0.271 & 0.313 & 0.323 & 0.359\\
$\lambda=0.9$& Std. Dev. & & $5.18 \times 10^{-4}$ & 0.123 & 0.099 & 0.036\\
& Std. Err. & & $4.10\times 10^{-4}$ & 0.130 & 0.104 & 0.038\\\midrule
 & Est. & 0.499 & 0.305 & 0.302 & 0.320 & 0.461 \\
$\lambda=0.95$& Std. Dev. & & $6.15 \times 10^{-4}$ & 0.243 & 0.220 & 0.097\\
& Std. Err. & & $6.02\times 10^{-4}$ & 0.271 & 0.246 & 0.112\\
\bottomrule
\end{tabular}
\begin{tablenotes}
\footnotesize
 \item In the \textquoteleft{}Index' column, we abbreviate \textquoteleft{}Estimator' as `Est.' and \textquoteleft{}Standard Deviation' as `Std. Dev.'. In the first row, \textquoteleft{}qDQ', \textquoteleft{}wDQ' and `mixDQ' refer to the queue-length-based, the response-time-based and the mixed DQ estimators, respectively, as defined in (\ref{def;qdq}), (\ref{def;wdq}) and (\ref{def;mixdq}).   
\end{tablenotes}
\end{threeparttable}
\end{table}

\begin{figure}[!htbp]
    \centering
    \subfigure[$\lambda=0.5$]{\includegraphics[width=0.33\linewidth]{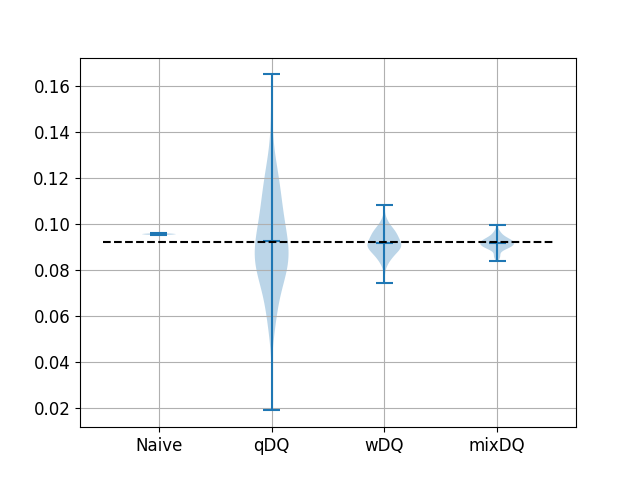}}
    \subfigure[$\lambda=0.6$]{\includegraphics[width=0.33\linewidth]{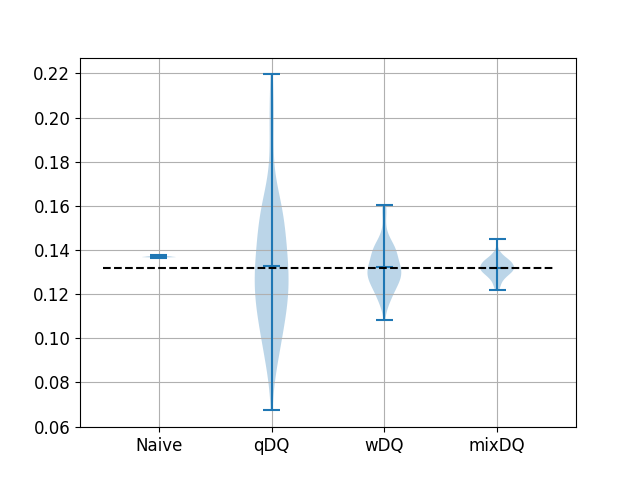}
    }
    \subfigure[$\lambda=0.7$]{\includegraphics[width=0.33\linewidth]{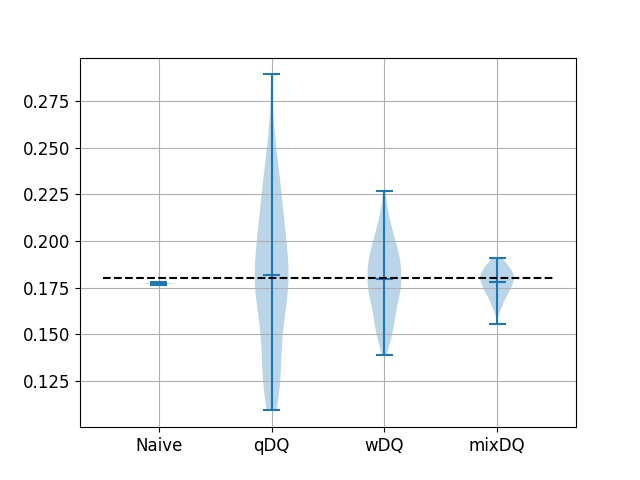}}
    \subfigure[$\lambda=0.8$]{\includegraphics[width=0.33\linewidth]{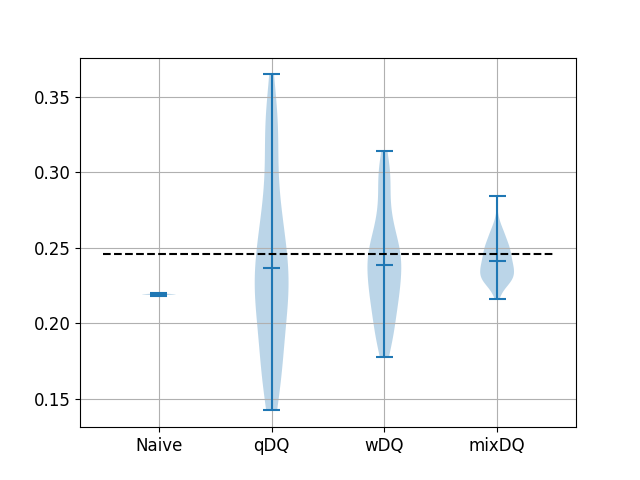}}
    \subfigure[$\lambda=0.85$]{\includegraphics[width=0.33\linewidth]{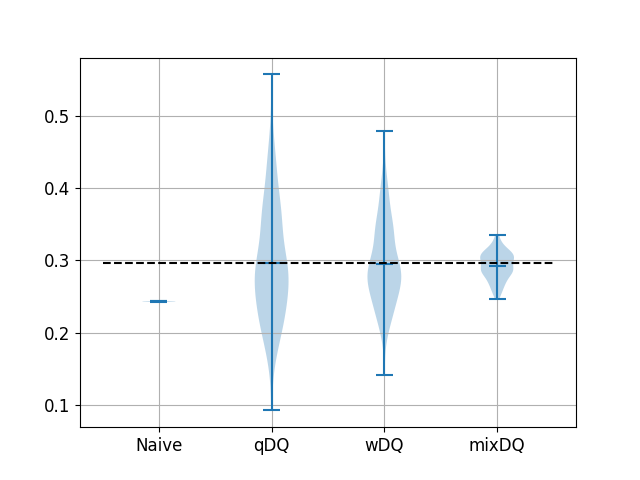}}
    \subfigure[$\lambda=0.9$]{\includegraphics[width=0.33\linewidth]{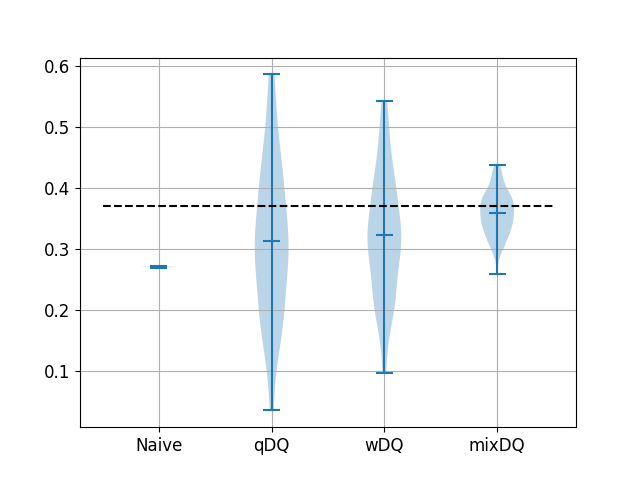}}
     \subfigure[$\lambda=0.95$]{\includegraphics[width=0.33\linewidth]{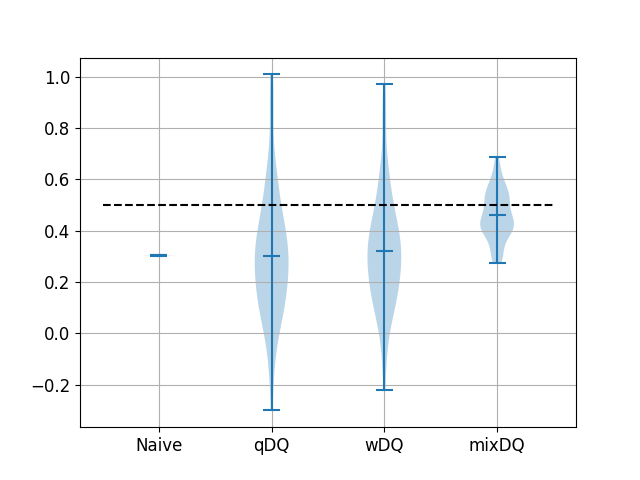}
     \label{Power-of-5/3 Experiments with lambda=0.95}}
     \subfigure[MSE Graph]{\includegraphics[width=0.33\linewidth]{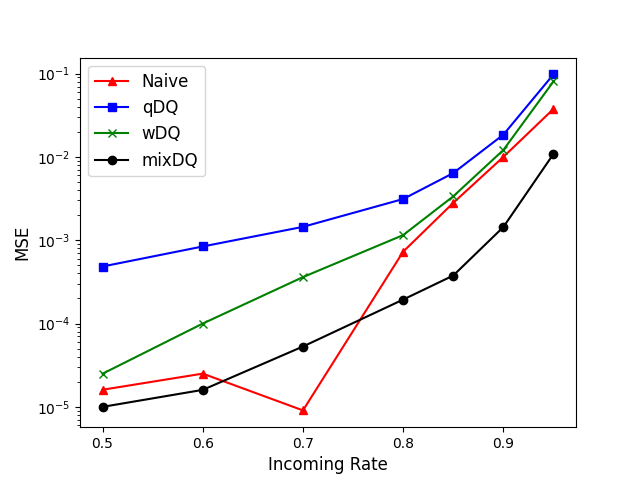}}
    \caption{Estimator performance in the homogeneous service-time setting for the power-of-$5/3$ experiment. Panels (a)--(g) show estimator distributions across arrival rates, with dashed lines indicating the ground-truth GTE. Panel (h) compares the MSE of different estimators.}
     \label{Power-of-5/3 Experiments}
\end{figure}

\subsection{Doubly Robust Differences-in-Q Estimator}
\label{Doubly Robust Differences-in-Q Estimator}
Inspired by \cite{farias2023correcting,2015Doubly}, we propose three types of doubly robust Differences-in-Q estimators and study the effects of variance reduction led by parametric estimations of Q-functions. Specifically, we define doubly robust Q-functions by
\begin{equation}
Q_{\cdot}^{\rm DR}(s,a)=Q_{\cdot}^{\rm REG}(s,a)+\frac{1_{\{p=a\}}}{p(a\vert s)}(Q_{\cdot}(s,a)-Q_{\cdot}^{\rm REG}(s,a)),
\label{def;dr;q}
\end{equation}
where $p(a\vert s)$ denote the probability that the system chooses policy $a$ at state $s$ and $Q_{\cdot}$ can be anyone of the Q-functions among $\{Q_{q},Q_{w},Q_{\rm mix}\}$. Here, $Q_{\cdot}^{\rm REG}(s,a)$ is an approximator of Q-funciton. The data-driven Q-functions are estimated by
\begin{equation}
\hat{Q}_{\cdot,j}^{\rm DR}(a)=\hat{Q}_{\cdot,j}^{\rm REG}+\frac{\mathbf{1}_{\{p=a\}}}{\hat{p}(a\vert s)}(\hat{Q}_{\cdot,j}-\hat{Q}_{\cdot,j}^{\rm REG}),
\label{def;dr;q;data}
\end{equation}
where $\hat{Q}_{\cdot}$ has been introduced before and $\hat{p}$ is estimated by
\begin{equation}
\hat{p}(a=0)=\frac{n_C}{n_C+n_T},\;\hat{p}(a=1)=\frac{n_T}{n_C+n_T}.
\end{equation}
Next, we introduce the regressive approximation of the Q-functions. Since the estimated queue length $\hat{c}_{q}$ only returns marginal observation of the system, the Markov property is no longer kept well. Hence, we adopt a linear regression model that takes the past five costs as independent variables, i.e,
\begin{equation}
Q^{\rm REG}_j=\beta+\sum_{u=0}^{4}\beta_{u}c_{q,j-u}.
\label{regression;model}
\end{equation}
Therefore, our regression is conducted on data set 
\begin{equation}
\Big{\{}(c_{q,j},c_{q,j-1},c_{q,j-2},c_{q,j-3},c_{q,j-4};\hat{Q}_{\cdot,j}):j\geq 4 \Big{\}},
\end{equation}
where $\cdot$ can be $q,w,mix$. Here we have only adopted queue-length-based costs as the independent variables because it is numerically the best among queue-length-based cost, response-time-based cost and mixed cost. Based on regression model (\ref{regression;model}), we construct parametric estimation of the Q-functions by
\begin{equation}
\hat{Q}_{\cdot,j}^{\rm REG}=\hat{\beta}+\sum_{u=0}^{4}\hat{\beta}_{u}c_{.,j-u}.
\end{equation}
Substituting this into (\ref{def;dr;q;data}) leads to the data-driven estimation of doubly robust Q-functions. We further propose doubly robust Differences-in-Q estimators by
\begin{equation}
\widehat{DQ}_{\cdot}^{\rm DR}=\frac{1}{n_C+n_T-4-L}\sum_{j=4}^{n_C+n_T-L}\left(\hat{Q}^{\rm DR}_{.,j}(1)-\hat{Q}^{DR}_{.,j}(0)\right).
\end{equation}

In essence, the doubly robust Q-functions represent a refinement in accurately estimating Q-functions. Beyond their inherent doubly robustness as discussed in the referenced literature \cite{2015Doubly}, the addition of regressive estimation $\hat{Q}^{\rm REG}_{.}$ in (\ref{def;dr;q}) serves as a control variate, aiding in reducing variance in estimation. Unlike previous approaches outlined in Section \ref{Mixed Differences-in-Q Estimator}, this control variate doesn't involve the interaction of average queue lengths and response times, thus sidestepping implications related to Little's Law. In Table \ref{doubly;robust;data} and  Figure \ref{doubly-robust-fig} below, we compare in pairs the queue-length-based Differences-in-Q estimator and the doubly robust queue-length-based Differences-in-Q estimator, the response-time-based Differences-in-Q estimator and the doubly robust response-time-based Differences-in-Q estimator, the mixed Differences-in-Q estimator and the doubly robust mixed Differences-in-Q estimator. Results from experiments indicate that doubly robust estimators notably reduce variance only when $\lambda$ is relatively large and the estimators don't involve mixed costs. On the other hand, when considering both response time and queue length concurrently, the doubly robust estimator exhibits minimal, if any, variance reduction effect. This observation underscores the significant role of Little's Law in comprehending and elucidating the dynamics of the queue system we consider.

\begin{table}[!htbp]
\centering
\begin{threeparttable}
\caption{Doubly Robust Experiments with $d_{1}=3$, $d_{2}=2$}
\begin{tabular}{ccccccccc}\toprule
\label{doubly;robust;data}
$\lambda$ & Index & GTE & qDQ & qDQ-DR & wDQ & wDQ-DR & mixDQ & mixDQ-DR \\\midrule
 & Est. & 0.138 & 0.147 & 0.146 & 0.140 & 0.140 & 0.138 & 0.138\\
$\lambda=0.5$& Std. Dev. & & 0.030 & 0.030 & 0.010 & 0.009 & 0.004 & 0.004\\
& Std. Err. & & 0.029 & 0.029 & 0.009 & 0.009 & 0.004 & 0.004 \\\midrule
 & Est. & 0.187 & 0.180 & 0.180 & 0.183 & 0.183 & 0.184 & 0.184\\
$\lambda=0.6$& Std. Dev. & & 0.032 & 0.031 & 0.013 & 0.013 & 0.006 & 0.006\\
& Std. Err. & & 0.036 & 0.036 & 0.015 & 0.015 & 0.006 & 0.006\\\midrule
 & Est. & 0.252 & 0.242 & 0.243 & 0.246 & 0.246 & 0.249 & 0.249\\
$\lambda=0.7$&Std. Dev. & & 0.043 & 0.043 & 0.023 & 0.023 & 0.009 & 0.008\\
& Std. Err.  & & 0.049 & 0.048 & 0.025 & 0.025 & 0.008 & 0.008\\\midrule
 & Est. & 0.358 & 0.348 & 0.347 & 0.348 & 0.347 & 0.348 & 0.348\\
$\lambda=0.8$& Std. Dev. & & 0.073 & 0.071 & 0.047 & 0.046 & 0.015 & 0.015\\
& Std. Err. & & 0.073 & 0.073 &0.047 &0.047 & 0.015 & 0.015\\\midrule
 & Est. & 0.439 & 0.421 & 0.415 & 0.423 & 0.419 & 0.428 & 0.428\\
$\lambda=0.85$& Std. Dev. & & 0.094 & 0.087 & 0.068 & 0.063 & 0.022 & 0.022\\
& Std. Err. & & 0.102 & 0.098 & 0.073 & 0.070 & 0.024 & 0.024\\\midrule
 & Est. & 0.566 & 0.485 & 0.480 & 0.497 & 0.493 & 0.541 & 0.541\\
$\lambda=0.9$& Std. Dev. & & 0.138 & 0.117 & 0.110 & 0.094 &0.044 & 0.044\\
& Std. Err.  & & 0.154 & 0.143 & 0.124 & 0.116 & 0.044 & 0.044\\\midrule
 & Est. & 0.797 & 0.422 & 0.445 & 0.459 & 0.480 & 0.730 & 0.732\\
$\lambda=0.95$& Std. Dev. & & 0.299 & 0.192 & 0.268 & 0.175 & 0.114 & 0.114\\
& Std. Err. & & 0.294  & 0.249 & 0.265 & 0.225 & 0.120 & 0.119\\
\bottomrule
\end{tabular}
\begin{tablenotes}
\footnotesize
 \item In the \textquoteleft{}Index' column, we abbreviate \textquoteleft{}Estimator' as `Est.' and \textquoteleft{}Standard Deviation' as `Std. Dev.'. In the first row, \textquoteleft{}qDQ', \textquoteleft{}wDQ' and `mixDQ' refer to the queue-length-based, the response-time-based and the mixed DQ estimators, respectively, as defined in (\ref{def;qdq}), (\ref{def;wdq}) and (\ref{def;mixdq}). Additionally, the suffix \textquoteleft{}-DR' refers to corresponding doubly robust estimators.  
\end{tablenotes}
\end{threeparttable}
\end{table}
\begin{figure}[!htbp]
    \centering
    \includegraphics[width=0.5\textwidth]{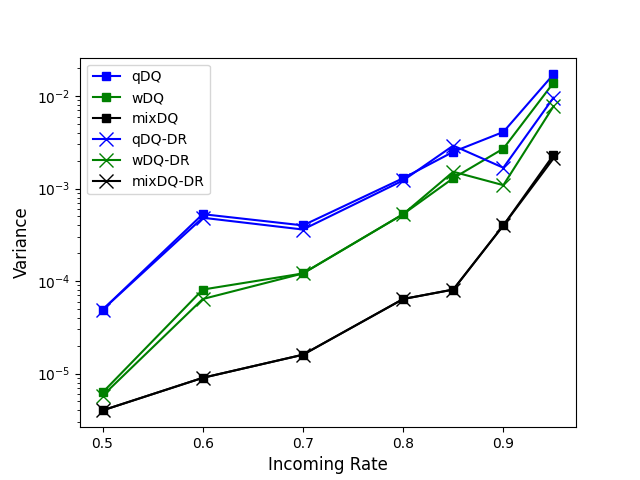}
    \caption{Estimator performance in the doubly robust experiments. The Figure compares the MSE of different estimators.}
    \label{doubly-robust-fig}
\end{figure}

\subsection{Experiments for Other Load Balancing Policies}
\label{Experiments for Other Load Balancing Policies}
We numerically conduct experiments beyond the power-of-$d$ policy. Among all the other load balancing policies, we choose the MSJQ-$r$ policy and JIQ-$d$ policy as alternative illustrative examples to validate the Differences-in-Q theory. The MJSQ-$r$ policy, which originates from \cite{banerjee2023load}, assigns new incoming jobs by the power-of-1 policy (uniformly and randomly to one of the servers in the system) with probability $r$ and by the power-of-$N$ policy (to the shortest server in the system) with probability $1-r$. We first present numerical results for the interference in MJSQ-$r_{1}/r_{2}$ policy, where $0<r_{1}<r_{2}<1$. Results are shown in Table \ref{mjsq-0.4-0.6-data} and Figure \ref{fig:mjsq-0.40.6 Experiments}.

\begin{table}[!htbp]
\centering
\begin{threeparttable}
\caption{MJSQ-$r_{1}/r_{2}$ Experiments with $r_{1}=0.4$, $r_{2}=0.6$}
\begin{tabular}{ccccccc}\toprule
\label{mjsq-0.4-0.6-data}
$\lambda$ & Index & GTE & Naive & qDQ & wDQ & mixDQ \\\midrule
 & Est. & 0.178 & 0.133 & 0.175 & 0.176 & 0.176\\
$\lambda=0.5$& Std. Dev. & & $4.45 \times 10^{-4}$ & 0.036 & 0.012 & 0.006\\
& Std. Err. & &$4.18 \times 10^{-4}$ & 0.035 & 0.012 & 0.006\\\midrule
 & Est. & 0.244 & 0.171 & 0.243 & 0.243 & 0.242\\
$\lambda=0.6$& Std. Dev. & & $4.34 \times 10^{-4}$ & 0.039 & 0.015 & 0.008\\
& Std. Err. & & $4.50 \times 10^{-4}$ & 0.040 & 0.016 & 0.008\\\midrule
 & Est. & 0.323 & 0.213 & 0.321 & 0.321 & 0.321\\
$\lambda=0.7$ & Std. Dev. & & $5.19 \times 10^{-4}$ & 0.044 & 0.020 & 0.010\\
& Std. Err. & & $4.90 \times 10^{-4}$ & 0.048 & 0.022 & 0.011\\\midrule
 & Est. & 0.421 & 0.259 & 0.409 & 0.409 & 0.409\\
$\lambda=0.8$& Std. Dev. & & $5.18 \times 10^{-4}$ & 0.061 & 0.035 & 0.016\\
& Std. Err. & & $5.43\times 10^{-4}$ & 0.065 & 0.037 & 0.016\\\midrule
 & Est. & 0.480 & 0.285 & 0.460 & 0.463 & 0.469\\
$\lambda=0.85$& Std. Dev. & & $6.13 \times 10^{-4}$ & 0.085 & 0.056 & 0.025\\
& Std. Err. & & $5.80 \times 10^{-4}$ & 0.083 & 0.055 & 0.024\\\midrule
 & Est. & 0.551 & 0.312 & 0.474 & 0.489 & 0.529\\
$\lambda=0.9$& Std. Dev. & & $7.46 \times 10^{-4}$ & 0.110 & 0.086 & 0.044\\
& Std. Err. & & $6.37 \times 10^{-4}$ & 0.124 & 0.095 & 0.047\\\midrule
 & Est. & 0.646 & 0.343 & 0.375 & 0.405 & 0.575 \\
$\lambda=0.95$& Std. Dev. & & $7.35 \times 10^{-4}$ & 0.266 & 0.236 & 0.128\\
& Std. Err. & & $7.86 \times 10^{-4}$ & 0.260 & 0.234 & 0.140\\
\bottomrule
\end{tabular}
\begin{tablenotes}
\footnotesize
 \item In the \textquoteleft{}Index' column, we abbreviate \textquoteleft{}Estimator' as `Est.' and \textquoteleft{}Standard Deviation' as `Std. Dev.'. In the first row, \textquoteleft{}qDQ', \textquoteleft{}wDQ' and `mixDQ' refer to the queue-length-based, the response-time-based and the mixed DQ estimators, respectively, as defined in (\ref{def;qdq}), (\ref{def;wdq}) and (\ref{def;mixdq}).   
\end{tablenotes}
\end{threeparttable}
\end{table}

\begin{figure}[!htbp]
    \centering
    \subfigure[$\lambda=0.95$]{\includegraphics[width=0.45\linewidth]{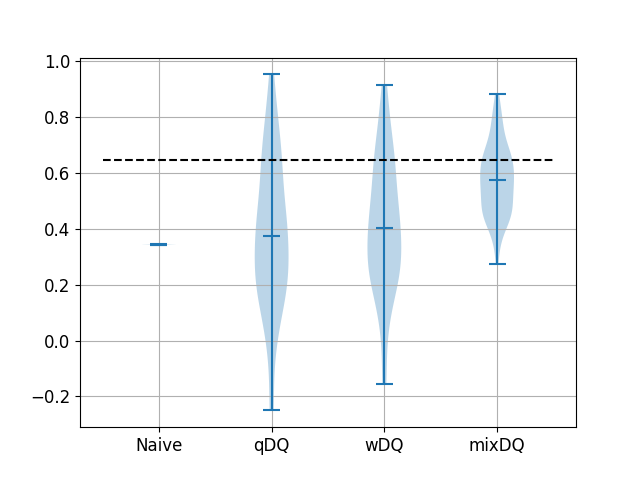}}
    \subfigure[MSE Graph]{\includegraphics[width=0.45\linewidth]{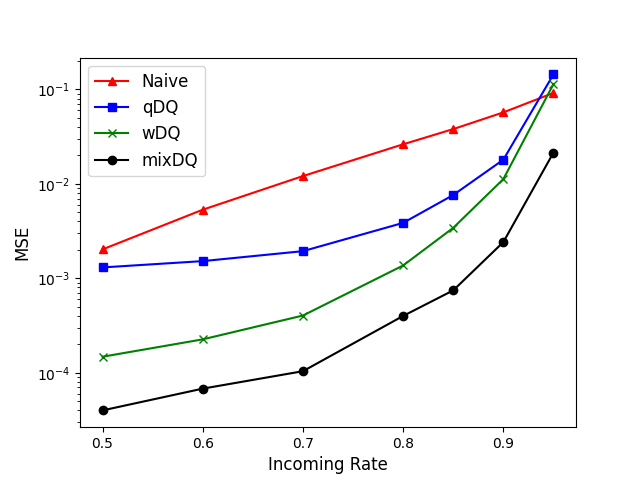}\label{MJSQ-0.4/0.6 Experiments}
    \label{mjsq-0.40.6 Experiments with lambda=0.95}}
    \caption{Estimator performance in the homogeneous service-time setting for the MJSQ-$0.4/0.6$ experiment. Panels (a) shows estimator distributions under $\lambda=0.95$, with dashed lines indicating the ground-truth GTE. Panel (b) compares the MSE of different estimators.}
     \label{fig:mjsq-0.40.6 Experiments}
\end{figure}

Furthermore, we conduct experiments on the MJSQ-$r$ policy and the power-of-$d$ policy. Here we set the MJSQ-$r$ policy as a control policy and the power-of-$d$ policy as a treatment policy. results are presented in Table \ref{mjsq-0.95-power-of-1-data} and Figure \ref{mjsq-0.95-power-of-1}.

\begin{table}[!htbp]
\centering
\begin{threeparttable}
\caption{MJSQ-$r$/Power-of-$d$ Experiments with $r=0.95$, $d=1$}
\begin{tabular}{ccccccc}\toprule
\label{mjsq-0.95-power-of-1-data}
$\lambda$ & Index & GTE & Naive & qDQ & wDQ & mixDQ \\\midrule
 & Est. & 0.095 & 0.049 & 0.101 & 0.098 & 0.095\\
$\lambda=0.5$& Std. Dev. & & $7.54 \times 10^{-4}$ & 0.081 & 0.040 & 0.004\\
& Std. Err. & & $8.60 \times 10^{-4}$ & 0.077 & 0.038 & 0.003\\\midrule
 & Est. & 0.174 & 0.072 & 0.159 & 0.165 & 0.173\\
$\lambda=0.6$& Std. Dev. & & $1.10\times 10^{-3}$ & 0.112 & 0.065 & 0.006\\
& Std. Err. & & $1.06 \times 10^{-3}$ & 0.106 & 0.063 & 0.006\\\midrule
 & Est. & 0.349 & 0.110 & 0.293 & 0.310 & 0.348\\
$\lambda=0.7$ & Std. Dev. & & $1.39 \times 10^{-3}$ & 0.161 & 0.110 & 0.012\\
& Std. Err. & & $1.40 \times 10^{-3}$ & 0.156 & 0.107 & 0.013\\\midrule
 & Est. & 0.826 & 0.181 & 0.433 & 0.518 & 0.818\\
$\lambda=0.8$& Std. Dev. & & $1.87 \times 10^{-3}$ & 0.209 & 0.164 & 0.028\\
& Std. Err. & & $2.01 \times 10^{-3}$ & 0.247 & 0.193 & 0.029\\\midrule
 & Est. & 1.447 & 0.246 & 0.528 & 0.682 & 1.416\\
$\lambda=0.85$& Std. Dev. & & $2.67\times 10^{-3}$ & 0.311 & 0.258 & 0.058\\
& Std. Err. & & $2.58 \times 10^{-3}$ & 0.330 & 0.275 & 0.051\\\midrule
 & Est. & 3.034 & 0.360 & 0.711 & 0.982 & 2.871\\
$\lambda=0.9$& Std. Dev. & & $4.16 \times 10^{-3}$ & 0.447 & 0.395 & 0.104\\
& Std. Err. & & $3.63 \times 10^{-3}$ & 0.483 & 0.425 & 0.104\\\midrule
 & Est. & 9.346 & 0.622 & 0.646 & 1.201 & 7.580 \\
$\lambda=0.95$& Std. Dev. & & $6.83 \times 10^{-3}$ & 0.839 & 0.776 & 0.266\\
& Std. Err. & & $6.02 \times 10^{-3}$ & 0.828 & 0.770 & 0.290\\
\bottomrule
\end{tabular}
\begin{tablenotes}
\footnotesize
 \item In the \textquoteleft{}Index' column, we abbreviate \textquoteleft{}Estimator' as `Est.' and \textquoteleft{}Standard Deviation' as `Std. Dev.'. In the first row, \textquoteleft{}qDQ', \textquoteleft{}wDQ' and `mixDQ' refer to the queue-length-based, the response-time-based and the mixed DQ estimators, respectively, as defined in (\ref{def;qdq}), (\ref{def;wdq}) and (\ref{def;mixdq}).   
\end{tablenotes}
\end{threeparttable}
\end{table}

\begin{figure}[!htbp]
    \centering
    \subfigure[$\lambda=0.95$]{\includegraphics[width=0.45\linewidth]{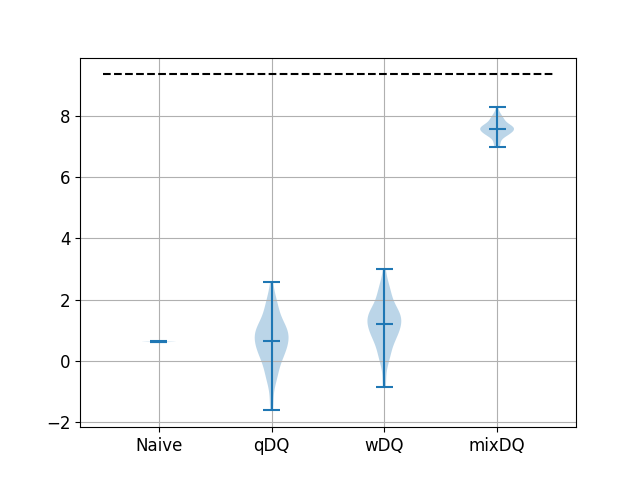}
    \label{mjsq-0.95-power-of-1 Experiments with lambda=0.95}}
    \subfigure[MSE Graph]{\includegraphics[width=0.45\linewidth]{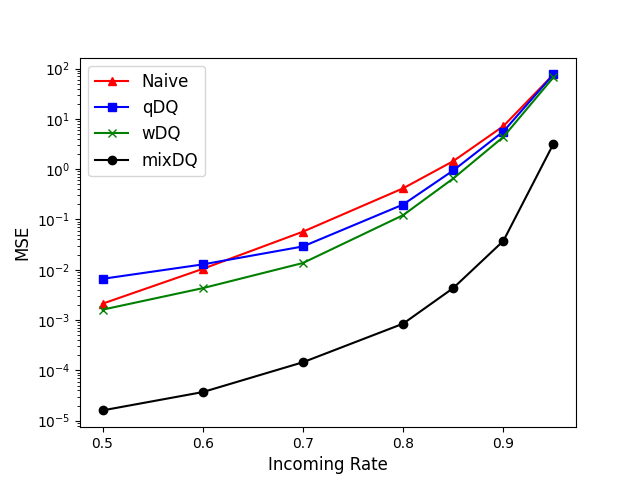}\label{mjsq-0.95-power-of-1 Experiments}}
    \hfill
    \caption{Estimator performance in the homogeneous service-time setting for the MJSQ-$0.95$/Power-of-1 (random assignment) experiment. Panels (a) shows estimator distributions under $\lambda=0.95$, with dashed lines indicating the ground-truth GTE. Panel (b) compares the MSE of different estimators.}
    \label{mjsq-0.95-power-of-1}
\end{figure}

Finally, we study the experiment related to the JIQ-$d$ policy. Initially, the JIQ policy was proposed in response to the heavy communication efforts caused by the power-of-$d$ policy. The JIQ policy assigns a new incoming job to one of the idle servers if there exists, and otherwise assigns the job randomly to one of the servers in the system. See \cite{Badonnel2008_jiq,LU2011_jiq,Baltzer2015_jiq} for more discussions on this scheduling policy. A straightforward generalization of the JIQ policy, proposed in \cite{jiq-d_policy}, is the so-called JIQ-$d$ policy, which assigns the new job to one of the idle servers, if there exists, and otherwise assigns the job according to the power-of-$d$ policy. We now present the results concerning JIQ-2/MJSQ-0.4 experiments in Table \ref{jiq-2-mjsq-0.4-data} and Figure \ref{jiq-2-mjsq-0.4 Experiment}.

\begin{table}[!htbp]
\centering
\begin{threeparttable}
\caption{JIQ-$d$/MJSQ-$r$ Experiments with $d=2$, $r=0.4$}
\begin{tabular}{ccccccc}\toprule
\label{jiq-2-mjsq-0.4-data}
$\lambda$ & Index & GTE & Naive & qDQ & wDQ & mixDQ \\\midrule
 & Est. & 0.248 & 0.222 & 0.245 & 0.245 & 0.245\\
$\lambda=0.5$& Std. Dev. & & $5.28 \times 10^{-4}$ & 0.024 & 0.005 & 0.004\\
& Std. Err. & & $2.12 \times 10^{-4}$ & 0.023 & 0.005 & 0.004\\\midrule
 & Est. & 0.307 & 0.271 & 0.302 & 0.301 & 0.301\\
$\lambda=0.6$& Std. Dev. & & $4.81 \times 10^{-4}$ & 0.027 & 0.007 & 0.005\\
& Std. Err. & & $2.20 \times 10^{-4}$ & 0.024 & 0.007 & 0.005\\\midrule
 & Est. & 0.364 & 0.317 & 0.356 & 0.355 & 0.355\\
$\lambda=0.7$ & Std. Dev. & & $4.07 \times 10^{-4}$ & 0.030 & 0.011 & 0.007\\
& Std. Err. & & $2.44 \times 10^{-4}$ & 0.030 & 0.012 & 0.007\\\midrule
 & Est. & 0.415 & 0.347 & 0.403 & 0.403 & 0.403\\
$\lambda=0.8$& Std. Dev. & & $4.01 \times 10^{-4}$ & 0.045 & 0.027 & 0.015\\
& Std. Err. & & $2.97 \times 10^{-4}$ & 0.047 & 0.028 & 0.015\\\midrule
 & Est. & 0.433 & 0.339 & 0.426 & 0.423 & 0.418\\
$\lambda=0.85$& Std. Dev. & & $4.86\times 10^{-4}$ & 0.075 & 0.053 & 0.025\\
& Std. Err. & & $3.47 \times 10^{-4}$ & 0.069 & 0.048 & 0.025\\\midrule
 & Est. & 0.433 & 0.281 & 0.398 & 0.402 & 0.415\\
$\lambda=0.9$& Std. Dev. & & $8.82 \times 10^{-4}$ & 0.118 & 0.094 & 0.044\\
& Std. Err. & & $4.30 \times 10^{-4}$ & 0.113 & 0.091 & 0.048\\\midrule
 & Est. & 0.373 & 0.085 & 0.297 & 0.303 & 0.351 \\
$\lambda=0.95$& Std. Dev. & & $2.59 \times 10^{-3}$ & 0.280 & 0.254 & 0.135\\
& Std. Err. & & $6.41 \times 10^{-4}$ & 0.258 & 0.236 & 0.137\\
\bottomrule
\end{tabular}
\begin{tablenotes}
\footnotesize
 \item In the \textquoteleft{}Index' column, we abbreviate \textquoteleft{}Estimator' as `Est.' and \textquoteleft{}Standard Deviation' as `Std. Dev.'. In the first row, \textquoteleft{}qDQ', \textquoteleft{}wDQ' and `mixDQ' refer to the queue-length-based, the response-time-based and the mixed DQ estimators, respectively, as defined in (\ref{def;qdq}), (\ref{def;wdq}) and (\ref{def;mixdq}).   
\end{tablenotes}
\end{threeparttable}
\end{table}

\begin{figure}[!htbp]
    \centering
    \subfigure[$\lambda=0.95$]{\includegraphics[width=0.45\linewidth]{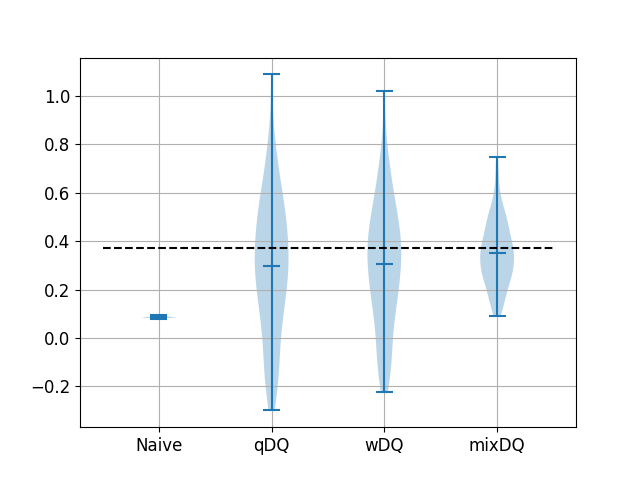}}
    \subfigure[MSE Graph]{\includegraphics[width=0.45\linewidth]{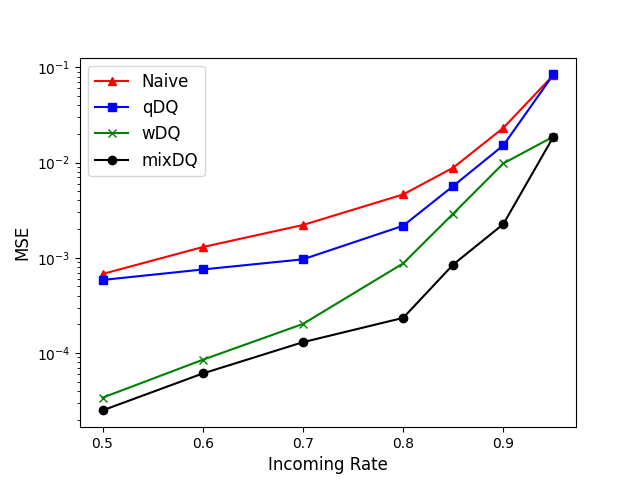}\label{jiq-2-mjsq-0.4 Experiments}}
    \label{jiq-2-mjsq-0.4 Experiments with lambda=0.95}
    \hfill
    \caption{Estimator performance in the homogeneous service-time setting for the JIQ-$2$/MJSQ-$0.4$ experiment. Panels (a) shows estimator distributions under $\lambda=0.95$, with dashed lines indicating the ground-truth GTE. Panel (b) compares the MSE of different estimators.}
    \label{jiq-2-mjsq-0.4 Experiment}
\end{figure}

\end{document}